\documentclass[journal]{IEEEtran}

\usepackage{amsmath,amsfonts}
\usepackage{algorithm}
\usepackage{algpseudocode}
\usepackage{array}
\usepackage{textcomp}
\usepackage{stfloats}
\usepackage{url}
\usepackage{verbatim}
\usepackage{graphicx}
\usepackage{cite}
\usepackage{subcaption}
\usepackage{amssymb}   
\usepackage[colorlinks=true, linkcolor=blue, citecolor=blue, urlcolor=blue]{hyperref}

\usepackage{amsthm}
\newtheorem{thm}{Theorem}
\newtheorem{assumption}{Assumption}
\newtheorem{corollary}{Corollary}

\newtheorem{proposition}{Proposition}

\usepackage{xcolor}
\definecolor{myLightRed}{rgb}{1,0.3,0.3}
\definecolor{myLightBlue}{rgb}{0.2,0.2,1}
\definecolor{myLightPurple}{rgb}{1,0.2,1}

\definecolor{myLightRed}{rgb}{0,0,0}
\definecolor{myLightBlue}{rgb}{0,0,0}
\definecolor{myLightPurple}{rgb}{0,0,0}

\begin{document}

\title{\textcolor{myLightRed}{Ambiguity Function Analysis of OFDM Signals \\ With Pilots and Data Payloads}}


\author{Jialin Wu, Fan Liu,~\IEEEmembership{Senior Member,~IEEE,} Ying Zhang,~\IEEEmembership{Graduate Student Member,~IEEE,} \\Yifeng Xiong,~\IEEEmembership{Member,~IEEE,} Jie Yang,~\IEEEmembership{Member,~IEEE,} Kawon Han,~\IEEEmembership{Member,~IEEE,} and Shi Jin,~\IEEEmembership{Fellow,~IEEE}
\thanks{This work was supported in part by the Mobile Information Networks-National Science and Technology Major Project under Grant 2025ZD1302000, and in part by the National Natural Science Foundation of China (NSFC) under Grant 62522107 and Grant 62331023. An earlier version of this paper \cite{wu2026ambiguity} has been submitted to the 2026 IEEE Globecom Workshops. \it(Corresponding author: Fan Liu.)\rm}
\thanks{Jialin Wu, Fan Liu and Shi Jin are with the National Mobile Communications Research Laboratory, Southeast University, Nanjing 210096, China (e-mail: jialinw@seu.edu.cn; fan.liu@seu.edu.cn; jinshi@seu.edu.cn).}
\thanks{Ying Zhang is with the School of Automation and Intelligent Manufacturing, Southern University of Science and Technology, Shenzhen 518055, China (email: zhangying2024@mail.sustech.edu.cn).}
\thanks{Yifeng Xiong is with the School of Information and Electronic Engineering, Beijing University of Posts and Telecommunications, Beijing 100876, China (e-mail: yifengxiong@bupt.edu.cn).}
\thanks{Jie Yang is with the Frontiers Science Center for Mobile Information Communication and Security and the Key Laboratory of Measurement and Control of Complex Systems of Engineering, Ministry of Education, Southeast University, Nanjing 210096, China (e-mail: yangjie@seu.edu.cn).}
\thanks{Kawon Han is with the Department of Electrical Engineering, Ulsan National Institute of Science and Technology (UNIST), Ulsan 44919, South Korea (e-mail: kawon.han@unist.ac.kr).}
}

\maketitle

\begin{abstract}
\textcolor{myLightRed}{
Practical orthogonal frequency division multiplexing (OFDM) communication frames contain both deterministic pilots and random data payloads, motivating the joint ambiguity function (AF) analysis of the two components when the entire frame is reused for integrated sensing and communication (ISAC). This paper characterizes two discrete AF formulations for different Doppler regimes, namely the discrete periodic AF (DP-AF) and fast-slow-time AF (FST-AF), and derives closed-form expressions for their expected squared values. For the FST-AF, the expected sidelobe level (ESL) is uniform over the delay-Doppler plane and depends only on the pilot count, constellation kurtosis and total number of time-frequency resources, but not on the pilot symbols or pattern. For the DP-AF, we establish attainable lower and upper ESL bounds and show that no pilot design can minimize all sidelobes simultaneously. We further prove that attaining the lower bound at non-zero Doppler requires a periodic pilot pattern, while equally spaced chirp pilots, including Zadoff-Chu (ZC) sequences, maximize the numbers of sidelobes attaining the lower and upper bounds simultaneously. Two representative ZC pilot patterns widely encountered in communication frames are then examined: contiguous placement produces delay-Doppler ridges described by squared Dirichlet kernels, whereas equally spaced placement generates periodic peak-and-notch structures. Both regular patterns exhibit pronounced high sidelobes, suggesting that communication-oriented pilot patterns should be re-designed for delay-Doppler estimation in the context of ISAC. Numerical results validate the analysis and show that irregular pilot placement can suppress high sidelobes and improve target estimation performance.
}
\end{abstract}

\begin{IEEEkeywords}
\textcolor{myLightPurple}{Integrated sensing and communication (ISAC), orthogonal frequency division multiplexing (OFDM), pilot, ambiguity function (AF).}
\end{IEEEkeywords}

\newpage

\section{Introduction}
\IEEEPARstart{T}{he} development of 6G networks targets emerging applications like \textcolor{myLightPurple}{low-altitude economy} and autonomous driving \cite{rodr20256g}, demanding both seamless connectivity and accurate environmental perception. Integrated sensing and communication (ISAC) meets this need by merging sensing with communication, reusing hardware and spectrum for higher efficiency and mutual enhancement \cite{niu2024interference}. A central objective of ISAC is to develop dual-purpose waveforms that support both communication and sensing functions \cite{bazzi2025mutual}. In general, current waveform design methodologies fall into the following classes: communication-centric, radar-centric, and joint approaches \cite{du2025full}. Communication-centric methods repurpose 
conventional 
physical-layer waveforms, including orthogonal frequency division multiplexing (OFDM), to perform sensing tasks. Radar-centric approaches integrate communication information into radar waveforms. Joint designs aim to create new waveforms that jointly serve both functions. Among these, the communication-centric paradigm offers the advantage of infrastructure compatibility, allowing deployment with minimal modification \cite{galappaththige2026cell}, while preserving communication performance.

Current communication-centric ISAC designs often use reference signals \textcolor{myLightBlue}{including the channel state information reference signal (CSI-RS) \cite{yadav2026bayesian} and the positioning reference signal (PRS) \cite{RSVTC}} to support sensing. 
The structured pilot signals \textcolor{myLightRed}{are typically composed by well-designed sequences}, which give favorable correlation behaviors for basic estimation \cite{zhu2023pilot}. However, the time-frequency resources they occupy are  limited to about 10\%, leaving the bulk of the remaining resources unexploited for sensing purposes. Due to the increasing demand for higher sensing accuracy, it is \textcolor{myLightRed}{necessary} to also utilize the remaining resources that carry random \textcolor{myLightPurple}{data payloads},
so as to boost sensing performance and spectrum utilization \cite{11572999}.

Towards that end, recent efforts have started to explore the design of random communication waveforms, with OFDM gaining attention due to its widespread deployment in existing standards \cite{han2026constellation} and effective ranging sidelobe suppression. A representative study derived the average squared auto-correlation function (ACF) under random signaling to assess the statistical capability of range estimation \cite{CP-OFDM}. The results indicated that for all orthogonal waveforms that include a cyclic prefix (CP), OFDM yields the lowest sidelobe levels for the constellation families of quadrature amplitude modulation (QAM) and phase shift keying (PSK). Extended work confirmed that OFDM continues to exhibit the lowest ranging sidelobes under Nyquist pulse shaping \cite{iceberg}, underscoring the potential of utilizing OFDM for sensing under random ISAC signaling.

While the ACF offers a fundamental measure for ranging performance, it only captures the behavior of the ambiguity function (AF) at the zero-Doppler cut and therefore fails to reveal the full delay-Doppler behavior.
The AF serves as a foundational tool for analyzing the matched-filter response of a waveform across the delay-Doppler domain \cite{rihaczek1967delay}, making it essential for evaluating resolution and interference in dynamic sensing scenarios. The AF of communication waveforms has attracted considerable research interest\textcolor{myLightBlue}{.}
In \cite{dayarathna2024otfs}, the AF of the orthogonal time-frequency space (OTFS) waveform was characterized by approximating its mean and variance. The average squared 
AFs of affine frequency division multiplexing (AFDM) and OFDM waveforms were derived in \cite{ni2026ambiguity}. Additionally, a recent study \cite{zhang2025discrete} sought to establish a unified statistical characterization for random waveforms, including OFDM, using two formulations of discrete AFs: the discrete periodic AF (DP-AF) and the fast-slow-time AF (FST-AF). The DP-AF captures the periodic self-convolution over the delay-Doppler grid for CP-equipped \textcolor{myLightRed}{signals under block} transmissions. The FST-AF, in contrast, assumes that the Doppler shift is constant for the duration of each fast-time block and varies only across slow-time pulses, which decouples the two dimensions of delay and Doppler. The FST-AF approximates the DP-AF under the small-Doppler assumption, \textcolor{myLightBlue}{yielding a compact expression for AF characteristics and 
providing an insight into the delay-Doppler sidelobe structure that is free of inter-carrier interference (ICI).}
The findings in \cite{zhang2025discrete} indicate that, in the DP-AF formulation, there does not exist a waveform capable of reaching the lowest expected sidelobe level across an arbitrary two-dimensional area of the delay-Doppler plane. Conversely, within the FST-AF formulation, OTFS achieves the lowest sidelobes for super-Gaussian constellations, while OFDM becomes optimal for sub-Gaussian constellations including commonly used QAM and PSK, highlighting the benefits of OFDM for sensing applications in communication-centric ISAC.




\textcolor{myLightRed}{Practical OFDM communication frames typically comprise both random data symbols and deterministic pilots, making it necessary to characterize their joint ambiguity behavior when the entire communication waveform is exploited for sensing. 
Importantly, the standardization of ISAC is actively underway in the 3rd generation partnership project (3GPP), where a key open question is whether sensing should leverage a dedicated pilot or reuse the pilots already embedded in the communication frame \cite{3gpp_pilot}. This debate directly hinges on how well the embedded pilots can support sensing together with the coexisting data payload, which makes the characterization of pilot-embedded OFDM waveforms a timely and practically important problem.
Although previous studies have characterized the statistical ambiguity behavior of OFDM signals
composed solely of random data, the analysis has not yet been extended to the practical frame structures, which have already been exploited for sensing in recent studies.} An affine-precoded superimposed pilot scheme that jointly harnesses pilots and data for OFDM sensing was introduced in \cite{gupta2024affine}. Non-uniform pilot patterns were designed to suppress the pilot grating lobes for OFDM sensing \cite{bouziane2026optimized}. An AF analysis of pilot-assisted AFDM frames was presented in \cite{zhang2025afdm}. A multi-antenna approach in \cite{Both_Pilots_and_Data_Payloads} characterized the mean square error when utilizing pilot sequences and data symbols for sensing. Nevertheless, the statistical ambiguity behavior of sensing strategies that process pilot and data components in OFDM signals has not been systematically characterized. 

\textcolor{myLightBlue}{Whereas existing AF analyses assume OFDM signals composed of independent and identically distributed data, the pilot-embedded signals considered here mix random data with deterministic pilots that are placed on distinct time-frequency resources. This heterogeneity renders the transmitted symbols no longer identically distributed and couples the pilot structure and the data statistics within the AF. }\textcolor{myLightRed}{This gives rise to cross-correlation terms that cannot be handled by existing frameworks and therefore necessitate a novel analytical decomposition.}
A rigorous characterization is needed to reveal how the design of pilots and that of communication symbols respectively shape the sidelobe behavior and affect the overall sensing performance, offering insights \textcolor{myLightRed}{to} the joint design of pilot and data resources in OFDM ISAC systems.






To fill the research gap, this work 
\textcolor{myLightRed}{develops an AF analysis framework} for OFDM signals that contain both random data and deterministic pilot symbols, covering the DP-AF and FST-AF.
We derive closed-form expected squared expressions, reveal their dependence on key system parameters, establish achievable bounds of expected sidelobe levels, and examine typical pilot patterns when using Zadoff-Chu (ZC) sequences as pilots. The major contributions are outlined below.

\begin{itemize}
\item We present analytical expressions for the expected squared DP-AF and FST-AF of OFDM signals containing data payloads and unit-amplitude pilots. For each fixed signal length, the zero-Doppler sidelobes of the expected squared DP-AF are governed by the constellation kurtosis and pilot count, while the remaining sidelobes depend on the pilot pattern and symbols. For FST-AF, given the number of time-frequency resources, the expected sidelobe levels are all identical and are determined by the pilot count and constellation kurtosis.

\item For the expected squared DP-AF, we demonstrate that for any non-zero Doppler shift, the sum of the sidelobe levels over delay \textcolor{myLightPurple}{is 
independent} of pilot design, and that cyclically shifting the pilot pattern leaves the expected squared DP-AF unchanged. More importantly, attainable upper and lower bounds of expected sidelobe levels are established, \textcolor{myLightPurple}{and we show that no pilot design can minimize all sidelobes in a two-dimensional delay-Doppler region simultaneously. We further prove that attaining the lower or upper bound at a non-zero Doppler requires a periodic pilot pattern}, and the maximum number of bound-achieving sidelobes is revealed to be realized by equally spaced chirp sequences, including ZC sequences.




\item We then study the expected squared DP-AF when ZC sequences are used as pilots, with two typical pilot patterns. Equally spaced placement produces a comb-like ambiguity profile where sidelobes alternate between the upper and lower \textcolor{myLightPurple}{bounds.} 
Contiguous placement yields a squared Dirichlet-kernel structure with delay-Doppler \textcolor{myLightPurple}{ridges.} 
\textcolor{myLightPurple}{Both communication-oriented regular patterns induce pronounced high sidelobes. 
To re-design the pilot pattern in the context of ISAC, 
we adopt a randomly placed pattern to suppress the high sidelobes and improve target estimation
performance.}
\end{itemize}

The remainder of this paper proceeds as follows. Section \ref{section_2} formulates the pilot-embedded OFDM signals and defines the DP-AF and FST-AF. In Section \ref{section_3}, we derive expressions for the expected squared AFs, and analyze how pilots influence the ambiguity behavior. Section \ref{section_4} presents a case study where ZC sequences serve as pilots, comparing different pilot patterns. We provide simulation results in Section \ref{section_5}. The conclusion is presented in Section \ref{section_6}.

\it Notations\rm: Scalars take plain font (e.g., $L$). Bold uppercase (e.g., $\mathbf{R}$) and lowercase (e.g., $\mathbf{w}$) represent matrices and vectors, respectively. The notation $w_n$ is used for the $n$-th entry of vector $\mathbf{w}$, and $u_{i,j}$ for the $(i,j)$-th component of matrix $\mathbf{U}$. Let $\mathbb{E}(\cdot)$ represent the expectation operator. We define the following operators: $\operatorname{vec}(\cdot)$ for vectorization, $(\cdot)^*$ for complex conjugate, $(\cdot)^T$ for transpose, and $(\cdot)^H$ for Hermitian transpose. Let $\left|\cdot\right|^2$ denote the element-wise squared modulus. $\otimes$ denotes the Kronecker product. $\odot$ denotes the Hadamard (element-wise) product. $\mathrm{Diag}(\cdot)$ denotes the diagonal operator. $\delta_{i,j}$ is set to $1$ when $i=j$, and $0$ when $i\neq j$. $\delta_{c,e,s,l}$ is set to $1$ if $c=e=s=l$, and set to $0$ otherwise. $\langle\cdot\rangle_N$ denotes modulo $N$. $a\mid b$ if $b$ is a multiple of $a$, and $a\nmid b$ otherwise. $\mathrm{gcd}(x,y)$ is the greatest common divisor. $\mathrm{Card}(\cdot)$ is the number of elements in a set. $c\equiv e(\mathrm{mod~}s)$ if $s\mid(c-e)$.

\section{Signal Formulation and Sensing Metrics}\label{section_2}
\subsection{OFDM Waveform with Embedded Pilots}

This part formulates two descriptions of the pilot-embedded OFDM waveform. The first is a serial one-dimensional (1D) characterization, in which the transmitted samples are regarded as a sequence in the time domain; the second is a two-dimensional (2D) characterization that places the OFDM resource on a time-frequency lattice, with subcarriers and symbol periods forming its two axes.

\subsubsection{1D OFDM Signaling}

Consider a single transmission block consisting of $N$ samples. These samples are collected into the column vector
$\mathbf{s}= \begin{bmatrix} s_0,s_1,\ldots,s_{N-1} \end{bmatrix}^T\in\mathbb{C}^N$. Out of the $N$ entries, $L$ positions (with $L\leq N$) carry pilot symbols whose values are fixed in advance, while the remaining $(N-L)$ entries are occupied by data symbols that are mutually independent and identically distributed, each drawn from a common constellation. For the ensuing analysis, the data symbols are assumed to satisfy the statistical constraints listed below.



\begin{assumption}\label{assumption_1} \it Every data symbol $d_{s}$ is a zero-mean random variable normalized to unit power, and its pseudo-variance is zero \rm\cite{zhang2025discrete}\it:
\begin{equation}\label{eq:asp1}\mathbb{E}(d_{s})=0,\quad\mathbb{E}(\left|d_{s}\right|^2)=1,\quad\mathbb{E}(d_{s}^2)=0.\end{equation}
Moreover, the third-order moment condition is assumed:
\begin{equation}\label{eq:asp2}\mathbb{E}(d_{s}\left|d_{s}\right|^2)=\mathbb{E}(d_{s}^*\left|d_{s}\right|^2)=0.
\end{equation} 
\end{assumption}

The unit-power scaling allows a fair assessment regardless of the constellation format. \textcolor{myLightRed}{Note that} \eqref{eq:asp1} and \eqref{eq:asp2} can be satisfied by the majority of commonly used constellation formats, such as QAM and PSK. \textcolor{myLightRed}{In particular}, BPSK and 8-QAM are two \textcolor{myLightRed}{outliers} that fail the pseudo-variance condition.

To facilitate the AF analysis, \textcolor{myLightRed}{we define the kurtosis of the constellation, which is equal to} the normalized fourth central moment \textcolor{myLightRed}{and reduces to the fourth raw moment under Assumption \ref{assumption_1}:}
\begin{equation}\kappa:=\frac{\mathbb{E}\left\{\left|d_{s}-\mathbb{E}\left(d_{s}\right)\right|^{4}\right\}}{\mathbb{E}^{2}\left\{\left|d_{s}-\mathbb{E}\left(d_{s}\right)\right|^{2}\right\}}=\mathbb{E}\left\{\left|d_{s}\right|^{4}\right\}\geq1,\end{equation}
This quantity governs the structure of the AF for random signaling \cite{zhang2025discrete} \cite{11548574}. Based on its value, constellations are sorted into two families: those fulfilling $\kappa < 2$ are called sub-Gaussian, whereas those satisfying $\kappa > 2$ are called super-Gaussian. Within the former family, PSK attains the extreme $\kappa = 1$, while QAM falls into the interval $1 \leq \kappa < 2$.


Under OFDM modulation, the 1D time-domain signal is $\mathbf{x}=\mathbf{F}_N^H\mathbf{s}$, where $\mathbf{F}_N$ represents the normalized discrete Fourier transform \textcolor{myLightRed}{(DFT) matrix} of order $N$. The number of subcarriers is $N$. The set of subcarrier indices occupied by the $L$ pilots is denoted by $I=\{i_0,i_1,\ldots,i_{L-1}\}$, which defines the pilot pattern. Separating $\mathbf{s}$ according to this pattern yields a vector $\mathbf{d}$ carrying only data and a vector $\mathbf{p}$ carrying only pilots, with their entries given by:
\begin{align}\label{eq:defdi}d_i =
\begin{cases} 
0, &\text{if}\ i \in I\\
s_i, &\text{else}
\end{cases},\ \ p_i =
\begin{cases}
s_i, &\text{if}\ i \in I\\
0, &\text{else}
\end{cases}.
\end{align}
The symbol vector then admits the additive form $\mathbf{s} = \mathbf{d} + \mathbf{p}$. The pilot pattern is equivalently captured by a binary selector $\mathbf{w}$ with $w_i=1$ for $\ i \in I$ and $0$ otherwise. An illustration of the pilot pattern \textcolor{myLightPurple}{and pilot symbols} is shown in Fig. \ref{fig_initial_pattern}. 
\begin{figure}[t]
    \centering
    \includegraphics[width=0.9\linewidth]{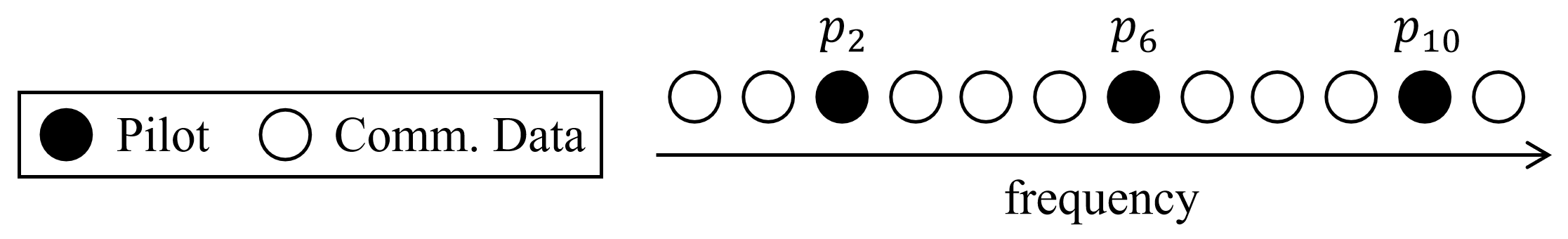}
    \caption{\textcolor{myLightBlue}{The pilot pattern and pilot symbols corresponding to $I=\{2,6,10\}$, where $N=12$, $L=3$.}}
    \label{fig_initial_pattern}
\end{figure}

To keep the comparison fair while preserving analytic tractability, the pilots are normalized to the same energy as the data symbols, that is, $\left|s_i\right|^{2}=1$ whenever $\ i \in I$. Pilots obeying this normalization are referred to as unit-amplitude pilots in what follows. \textcolor{myLightBlue}{The equal power allocation among subcarriers 
allows the AF analysis to focus on the effects of the number of pilots, pilot symbols and pilot patterns.}



\subsubsection{2D OFDM Signaling}
A two-dimensional extension of the preceding OFDM model is now considered. In this setting, the transmitted waveform is $\mathbf{X}=\mathbf{F}_N^H\mathbf{S}$, where $\mathbf{S}\in \mathbb{C}^{N\times M}$ is a symbol matrix that holds $L$ unit-amplitude pilot entries together with $(MN-L)$ data entries drawn independently and identically. The data entries are assumed to satisfy the same constellation constraints stipulated for the 1D model. The waveform $\mathbf{X}\in \mathbb{C}^{N\times M}$ extends over $N$ fast-time slots and $M$ slow-time slots.




\subsection{Evaluation Metrics for OFDM Sensing}
In this part, we adopt the discrete AFs, which were established in \cite{zhang2025discrete} for both 1D and 2D signaling, as the metrics for evaluating sensing performance.

\subsubsection{DP-AF associated with the 1D Signal}


Take a time-domain sequence comprising $N$ samples. A circular time shift by $k$ samples may be implemented via matrix multiplication. In particular, owing to the operations of CP insertion and removal, the effective shift is periodic, and we denote the corresponding operator by
\begin{equation}\mathbf{R}_{N,k}:=\begin{bmatrix}
\mathbf{0} & \mathbf{I}_k \\
\mathbf{I}_{N-k} & \mathbf{0}
\end{bmatrix},\end{equation}
where $\mathbf{I}_k$ is the identity matrix of order $k$. Likewise, a Doppler shift indexed by $q$ is realized by the frequency-domain diagonal matrix
\begin{equation}\mathbf{\Phi}_{N,q}:=\mathrm{Diag}\left(1,e^{j\frac{2\pi q}{N}},\ldots,e^{j\frac{2\pi q(N-1)}{N}}\right).\end{equation}

Suppose that the OFDM signal $\mathbf{x}$, after CP addition, passes through a multi-target channel. We characterize each of the \textcolor{myLightBlue}{${U}$ targets} by its delay index $k_\mu$, Doppler index $q_\mu$ and complex amplitude $\beta_\mu$, for $\mu=1,2,\ldots,{U}$. After removing the CP in the receiving stage, the returned echo is expressed as \cite{zhang2025discrete}
\begin{equation}\mathbf{y}=\sum_{\mu=1}^{U}\beta_{\mu}\mathbf{\Phi}_{N,q_\mu}\mathbf{R}_{N,k_\mu}\mathbf{x}+\mathbf{z},\end{equation}
where $\mathbf{z}$ is the complex Gaussian noise. By leveraging the identity $\mathbf{\Phi}_{N,q}\mathbf{R}_{N,k}=e^{-\frac{j2\pi qk}{N}}\mathbf{R}_{N,k}\mathbf{\Phi}_{N,q}$, we can express the matched-filter output as follows:
\begin{align}\label{mf_output}\nonumber&\ \tilde{y}_{k,q}^{\mathrm{MF}}=\mathbf{x}^H\mathbf{R}_{N,k}^T\mathbf{\Phi}_{N,q}^*\mathbf{y}\\&=\sum_{\mu=1}^{U}\beta_\mu e^{\frac{j2\pi k_\mu(q-q_\mu)}{N}}\mathbf{x}^H\mathbf{R}_{N,k-k_\mu}^T\mathbf{\Phi}_{N,q-q_\mu}^*\mathbf{x}+\tilde{z}_{k,q},
\end{align}
where $\tilde{z}_{k,q}=\mathbf{x}^H\mathbf{R}_{N,k}^T\mathbf{\Phi}_{N,q}^*\mathbf{z}$. The signal-dependent term presented above is referred to as the DP-AF:
\begin{align}\label{dpaf_def}\mathcal{X}^{\mathrm{DP}}(k,q):=\mathbf{x}^H\mathbf{R}_{N,k}^T\mathbf{\Phi}_{N,q}^*\mathbf{x},\quad(k,q)\in\mathbb{Z}_N^2,\end{align}
in which $k$ and $q$ represent the delay index and the Doppler index, respectively\textcolor{myLightRed}{, and} $\mathbb{Z}_N$
is the cyclic group of order $N$, realized as the set $\{0,1,..., N-1\}$ under addition modulo $N$.

The matched-filter output in \eqref{mf_output} is a weighted sum of  time-frequency shifted DP-AFs. For robust multi-target detection, the squared matched-filter response, denoted as $|\tilde{y}_{k,q}^{\mathrm{MF}}|^2$, is expected to exhibit prominent and well-separated peaks precisely at the coordinates $(k,q)=(k_\mu,q_\mu)$, while residual sidelobes should remain consistently low elsewhere. This behavior critically relies upon the DP-AF.
The sidelobe level is therefore quantified as \cite{zhang2025discrete}:
\begin{align}\label{define_squared_DPAF}
|\mathcal{X}^{\mathrm{DP}}(k,q)|^2 
 & =|\mathbf{x}^H\mathbf{\Phi}_{N,q}\mathbf{R}_{N,k}\mathbf{x}|^2,\quad(k,q)\neq(0,0),
\end{align}
whereas the mainlobe level equals $|\mathcal{X}^{\mathrm{DP}}(0,0)|^2=|\mathbf{x}^H\mathbf{x}|^2$.

To facilitate the calculation, the property of central symmetry can be used as follows:
\begin{align}\label{central_symmetry}
|\mathcal{X}^{\mathrm{DP}}(k,q)|^2=|\mathcal{X}^{\mathrm{DP}}({\langle N-k\rangle_N},{\langle N-q\rangle_N})|^2.
\end{align} 

\subsubsection{FST-AF associated with the 2D Signal}
Let the stacked OFDM vector $\mathbf{x}=\operatorname{vec}\left(\mathbf{X}\right)$ of dimension $MN$ be sent. Provided the Doppler shift is sufficiently small, the phase rotation induced by a target is approximately uniform over each $N$-sample fast-time block. Under this premise, the vectorized waveform can be rearranged into $M$ columns, giving rise to the 2D fast-slow-time (FST) arrangement $\mathbf{X}\in\mathbb{C}^{N\times M}$. Moreover, a CP is prepended per slow-time data segment at the transmitter so as to cover the largest expected delay. With several targets in the scene, the received signal after discarding the CP is captured by \cite{zhang2025discrete}:
\begin{equation}\mathbf{y}\approx\sum_{\mu=1}^{U}\beta_\mu\left(\mathbf{\Phi}_{M,q_\mu}\otimes\mathbf{R}_{N,k_\mu}\right)\mathbf{x}+\mathbf{z}.\end{equation}
After matched filtering, the output is:
\begin{align}
 & \nonumber\ \tilde{y}_{k,q}^{\mathrm{MF}}=\mathbf{x}^{H}\left(\mathbf{\Phi}_{M,q}^{*}\otimes\mathbf{R}_{N,k}^{T}\right)\mathbf{y} \\
 & \approx\sum_{\mu=1}^{{U}}\beta_{\mu}\mathbf{x}^H\left(\mathbf{\Phi}_{M,q-q_\mu}^*\otimes\mathbf{R}_{N,k-k_\mu}^T\right)\mathbf{x}+\tilde{z}_{k,q},
\end{align}
with $\tilde{z}_{k,q}=\mathbf{x}^H\left(\mathbf{\Phi}_{M,q}^*\otimes\mathbf{R}_{N,k}^T\right)\mathbf{z}$. The signal-dependent term appearing above is termed the FST-AF:
\begin{equation}\mathcal{X}^{\mathrm{FST}}\left(k,q\right):=\mathbf{x}^H\left(\mathbf{\Phi}_{M,q}^*\otimes\mathbf{R}_{N,k}^T\right)\mathbf{x},\end{equation}
where $k\in\mathbb{Z}_N$ and $q\in\mathbb{Z}_M$. By leveraging the identity $\mathbf{R}_{N,k} = \mathbf{F}_N^H \mathbf{\Phi}_{N,k}^* \mathbf{F}_N$, the FST-AF admits a compact representation \cite{zhang2025discrete}:
\begin{align}\mathcal{X}^\mathrm{FST}&=\sqrt{MN}\mathbf{F}_N^H\left|\mathbf{F}_N\mathbf{X}\right|^2\mathbf{F}_M.
\end{align}

Specializing to the 2D OFDM waveform, the above expression reduces to
\begin{align}\label{define_FSTAF}\mathcal{X}^\mathrm{FST}&=\sqrt{MN}\mathbf{F}_N^H\left|\mathbf{S}\right|^2\mathbf{F}_M.
\end{align}

\section{Statistical AF Analysis of Pilot-Embedded OFDM Signals}\label{section_3}
The transmitted random data symbols render the DP-AF and the FST-AF as random functions, thereby making the study of their statistical properties essential. For an AF $\mathcal{X}(k,q)$ and $(k,q)\neq(0,0)$, the expected sidelobe level (ESL) represents the expected local sidelobe power, expressed as $\mathbb{E}(|\mathcal{X}(k,q)|^2)$. The expected integrated sidelobe level (EISL) represents the total expected energy contained in all sidelobes, expressed as $\Gamma(\mathcal{X})=\sum_{(k,q)\neq(0,0)}\mathbb{E}(|\mathcal{X}(k,q)|^2)$.

In the rest of this section, we derive
closed-form ESL and EISL for the DP-AF and FST-AF of pilot-embedded OFDM signals, and discuss how these metrics depend on the number of pilots, pilot symbols, pilot patterns, and \textcolor{myLightPurple}{the 
kurtosis.}

\subsection{Analytical Expression for the Expected Squared DP-AF}
In the following theorem, we express the expectation of \eqref{define_squared_DPAF} in an analytical form.

\begin{thm}\label{theorem_1}
For \textcolor{myLightBlue}{the OFDM signals} embedding unit-amplitude pilots, the expected squared DP-AF is \rm
\textcolor{myLightPurple}{\begin{equation}
    \begin{aligned}\label{dp_af_theorem}\ &\ \mathbb{E}\left(|\mathcal{X}^{\mathrm{DP}}(k,q)|^2\right)=\begin{cases} N^2\delta_{k,0}+(\kappa-1)(N-L),  & \text{if}\ q=0,\\
\psi(k,q),  &\text{if}\ q\neq0,\end{cases}\end{aligned} \end{equation} }
\it where $\psi(k,q)$ depends on the pilot symbols and pilot patterns: 
\begin{align}\label{f_g_h_relation}
&\nonumber\ \psi(k,q) 
\\&\nonumber=\left|\sum_{n=0}^{N-1}p_np_{\langle n-q\rangle_N}^*\cdot e^{j2\pi nk/N}\right|^2+N-\sum_{n=0}^{N-1}w_nw_{\langle n-q\rangle_N}
\\&\nonumber=\left|\mathbf{p}^H\mathbf{F}_N\mathbf{\Phi}_{N,q}\mathbf{R}_{N,k}\mathbf{F}_N^H\mathbf{p}\right|^2+N-\mathbf{w}^H\mathbf{F}_N\mathbf{\Phi}_{N,q}\mathbf{F}_N^H\mathbf{w}
\\&= \textcolor{myLightRed}{g(k,q) + N - h(q)}.\end{align}
\end{thm}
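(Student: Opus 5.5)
We first move the DP-AF to the frequency domain, where it becomes a correlation of the symbol vector $\mathbf{s}$. Using $\mathbf{x}=\mathbf{F}_N^H\mathbf{s}$ and the identities $\mathbf{R}_{N,k}=\mathbf{F}_N^H\mathbf{\Phi}_{N,k}^*\mathbf{F}_N$ and $\mathbf{F}_N\mathbf{\Phi}_{N,q}\mathbf{F}_N^H=\mathbf{R}_{N,q}$ (up to the sign convention of the shift), we write
\begin{equation*}
\mathbf{x}^H\mathbf{\Phi}_{N,q}\mathbf{R}_{N,k}\mathbf{x}=\mathbf{s}^H\mathbf{F}_N\mathbf{\Phi}_{N,q}\mathbf{R}_{N,k}\mathbf{F}_N^H\mathbf{s}.
\end{equation*}
This is, up to a unimodular factor, the sum $A(k,q):=\sum_{n} s_n s^*_{\langle n-q\rangle_N} e^{j2\pi nk/N}$. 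So Doppler becomes a cyclic shift of the symbol index, and delay becomes a phase ramp. A unimodular factor does not affect $|\cdot|^2$, and we will check the conjugation and sign conventions against the first displayed form of $\psi$.

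Next we expand $|A(k,q)|^2$ as the double sum $\sum_{n,m} s_n s^*_{n-q} s^*_m s_{m-q}\, e^{j2\pi (n-m)k/N}$. We split $\mathbf{s}=\mathbf{d}+\mathbf{p}$ and classify each index according to whether it is a pilot position (deterministic, unit modulus) or a data position (i.i.d.). The expectation of each product of four factors is then evaluated using Assumption~\ref{assumption_1}.
\begin{itemize}
\item A data symbol appearing only once, or appearing together with one or two pilots, averages to zero by the zero-mean property and by \eqref{eq:asp2}.
\item Pairs of the form $d d$ or $d^* d^*$ vanish because the pseudo-variance is zero.
\item The only surviving data moments are $\mathbb{E}|d|^2=1$ and $\mathbb{E}|d|^4=\kappa$.
\end{itemize}

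\textbf{Case $q=0$.} Here $A(k,0)=\sum_n |s_n|^2 e^{j2\pi nk/N}$. Write $|s_n|^2=1+(|d_n|^2-1)$ on data positions, and $|s_n|^2=1$ on pilot positions. The constant part contributes $N\delta_{k,0}$ to $A$. The centered part has zero mean and independent terms, each of variance $\kappa-1$, over $N-L$ positions. Squaring and taking expectations then gives $N^2\delta_{k,0}+(\kappa-1)(N-L)$ directly.

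\textbf{Case $q\neq0$.} Here $n\neq n-q$, so each term $s_ns^*_{n-q}$ involves two distinct positions. Since data symbols have zero mean, $\mathbb{E}A=\sum_n p_np^*_{n-q}e^{j2\pi nk/N}$, which is the purely pilot sum. We then write $|A|^2=|\mathbb{E}A|^2+\mathbb{E}|A-\mathbb{E}A|^2$. The fluctuation $A-\mathbb{E}A$ is a sum of terms $s_ns^*_{n-q}$ in which at least one factor is data.

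For the covariance $\mathbb{E}[s_ns^*_{n-q}s^*_ms_{m-q}]$ with $n\neq m$ to survive, each data symbol must be paired with its own conjugate. This requires $n=m-q$ and $n-q=m$, which forces $2q\equiv0$. In that case the surviving term is $\mathbb{E}[s_n^2]\,\mathbb{E}[s^{*2}_{n-q}]$, and this is zero whenever a data symbol appears, by the pseudo-variance condition. So only the diagonal terms $n=m$ survive. Each diagonal term contributes $\mathbb{E}|s_n|^2|s_{n-q}|^2=1$ minus $|p_np^*_{n-q}|^2$. The latter equals $1$ exactly when both positions are pilots, i.e.\ when $w_nw_{n-q}=1$. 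Summing over $n$ yields the term $N-\sum_n w_nw_{\langle n-q\rangle_N}$.

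The remaining equalities in \eqref{f_g_h_relation} are rewritings back into matrix form. The pilot sum equals $\mathbf{p}^H\mathbf{F}_N\mathbf{\Phi}_{N,q}\mathbf{R}_{N,k}\mathbf{F}_N^H\mathbf{p}$ by the same transform used in the first step. The pilot-overlap count equals $\mathbf{w}^H\mathbf{R}_{N,q}\mathbf{w}$ in the shift form, that is, $\mathbf{w}^H\mathbf{F}_N\mathbf{\Phi}_{N,q}\mathbf{F}_N^H\mathbf{w}$.

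The main obstacle is the careful bookkeeping of mixed pilot–data index coincidences in the fourth-order expansion. This is especially delicate for the special case $2q\equiv0 \pmod N$, where the pairing $n=m-q$ could otherwise contribute. We will handle it by the zero-pseudo-variance argument above, and then re-verify the sign and conjugation conventions of the shift and phase operators.
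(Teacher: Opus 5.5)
Your proof is correct, and it organizes the computation differently from the paper. The paper keeps the quadratic form $\mathbf{s}^H\mathbf{B}\mathbf{s}$ with $b_{m,n}=e^{-j2\pi nk/N}\delta_{n,\langle m-q\rangle_N}$. It splits this form into $\mathbf{p}^H\mathbf{B}\mathbf{p}$, $\mathbf{d}^H\mathbf{B}\mathbf{d}$ and the mixed term $\mathbf{p}^H\mathbf{B}\mathbf{d}+\mathbf{d}^H\mathbf{B}\mathbf{p}$. Each piece is evaluated with generic moment identities for random quadratic forms: traces such as $\operatorname{tr}(\mathbf{B}\mathbf{W}'\mathbf{B}^H\mathbf{W}')$ plus a $(\kappa-2)$ diagonal correction. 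Only at the end does it recombine $\sum_n w'_nw'_{\langle n-q\rangle_N}+\sum_n w'_n(w_{\langle n-q\rangle_N}+w_{\langle n+q\rangle_N})$ into $N-h(q)$. For $q=0$ it has to sum three interacting pilot/data terms back into $|\sum_n e^{j2\pi nk/N}|^2$.

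You instead use the split $\mathbb{E}|A|^2=|\mathbb{E}A|^2+\mathbb{E}|A-\mathbb{E}A|^2$ for $q\neq0$ and the centering $|s_n|^2=1+\epsilon_n$ for $q=0$. Each case then follows almost immediately, and $N-h(q)$ appears directly as the number of indices $n$ at which $n$ and $\langle n-q\rangle_N$ are not both pilots. A side benefit is that your route shows exactly which hypotheses are used. For $q\neq0$, no position occurs more than twice among $n,\langle n-q\rangle_N,m,\langle m-q\rangle_N$; for $q=0$, only $|s_n|^2$ appears. So only zero mean, unit power, zero pseudo-variance and $\kappa$ enter, and the third-moment condition \eqref{eq:asp2} is never used. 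What the paper's split buys in return is an explicit attribution of the floor to data--data versus pilot--data interactions.

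One sentence in your $q\neq0$ argument should be repaired. For $n\neq m$ and $q\neq0$, no factor in $s_ns^*_{\langle n-q\rangle_N}s^*_ms_{\langle m-q\rangle_N}$ can ever meet its own conjugate. Moreover, the coincidences $n\equiv m-q$ and $\langle n-q\rangle_N\equiv m$ need not occur together. A single one (with $2q\not\equiv0$) gives terms such as $\mathbb{E}[s_n^2]\,p^*_{\langle n-q\rangle_N}p^*_{\langle n+q\rangle_N}$ when $n\pm q$ are pilots. Such terms are killed only by the zero pseudo-variance at a data position $n$, not by zero mean. Your bullet rules already cover this case, so the conclusion that only $n=m$ survives stands. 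State the justification as follows: every off-diagonal term containing a data factor contains it either singly, and vanishes by zero mean, or squared without its conjugate, and vanishes by zero pseudo-variance.
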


\begin{proof}
    \textcolor{myLightRed}{See Appendix \ref{proof_of_theorem_1}. }
\end{proof}

For brevity, we denote $\left|\mathbf{p}^H\mathbf{F}_N\mathbf{\Phi}_{N,q}\mathbf{R}_{N,k}\mathbf{F}_N^H\mathbf{p}\right|^2$ as $g(k,q)$ and $\mathbf{w}^H\mathbf{F}_N\mathbf{\Phi}_{N,q}\mathbf{F}_N^H\mathbf{w}$ as $h(q)$.
\textcolor{myLightBlue}{The function $g(k,q)$ is determined jointly by the pilot symbols and their pattern, whereas $h(k,q)$ depends solely on the pilot pattern.
It can be observed that $g(k,q)$ is the DP-AF of the pilot component $\mathbf{F}_N^H\mathbf{p}$, where $\mathbf{p}=\mathbf{s}\odot \mathbf{w}$ is formed by inserting $(N-L)$ zeros into the pilot sequence, 
with the locations of pilots dictated by the pilot pattern mask $\mathbf{w}$.}
\textcolor{myLightRed}{The function} $h(q)$ is the cyclic autocorrelation of the pilot pattern mask $\mathbf{w}$, where the cyclic shift amount corresponds to the Doppler index $q$.


The ESL 
\textcolor{myLightPurple}{for} $q=0$ is $(\kappa-1)(N-L)$, which depends on the signal length, the number of pilots and the constellation kurtosis. The ESL for $q=0$ equals zero when $L=N$ or the kurtosis $\kappa = 1$, for example, in PSK constellations. 
The ESL when $q\neq0$ is $\psi(k,q)$, which varies with the pilot symbols and their patterns, and is independent of the kurtosis $\kappa$. 

\subsection{Sum of the Expected Sidelobe Levels for the DP-AF}
Next we analyze the invariance of the sum of expected sidelobe levels for the DP-AF. From \cite{wu2026ambiguity}, the EISL  is:
\begin{align}
\Gamma(\mathcal{X}^{\mathrm{DP}}) =\left[N^2+(\kappa-1)(N-L)\right](N-1).
\end{align}

Hence, for a given signal length, $\Gamma(\mathcal{X}^{\mathrm{DP}})$ is determined solely by the pilot count $L$ and the constellation kurtosis $\kappa$. It follows from \eqref{dp_af_theorem} that the sum of the sidelobes \textcolor{myLightPurple}{over delay} at the zero-Doppler cut is \textcolor{myLightRed}{$(\kappa-1)(N-1)(N-L)$}. For a non-zero Doppler index, the following result can be \textcolor{myLightPurple}{derived:} 

\begin{corollary}
Under any given pilot pattern and unit-amplitude pilot symbols, the sum of expected sidelobe levels of DP-AF for any $q\neq0$ remains constant:
\begin{align}\label{sum_of_sidelobe}\sum_{k=0}^{N-1}\mathbb{E}\left(|\mathcal{X}^{\mathrm{DP}}(k,q)|^2\right)=
N^2. \end{align} \rm
\end{corollary}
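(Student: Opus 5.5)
We start from Theorem~\ref{theorem_1}. For $q\neq0$ it gives $\mathbb{E}(|\mathcal{X}^{\mathrm{DP}}(k,q)|^2)=\psi(k,q)=g(k,q)+N-h(q)$. Summing over $k=0,\ldots,N-1$ and using that $h(q)$ does not depend on $k$, we get
\begin{align}
\sum_{k=0}^{N-1}\psi(k,q)=\sum_{k=0}^{N-1}g(k,q)+N^2-N\,h(q). \nonumber
\end{align}
So the claim reduces to the single identity $\sum_{k}g(k,q)=N\,h(q)$ for every $q\neq0$.

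To prove this identity we use the explicit first form of $g$ in \eqref{f_g_h_relation}. Write $a_n:=p_np^*_{\langle n-q\rangle_N}$, so that
\begin{align}
g(k,q)=\Bigl|\sum_{n=0}^{N-1}a_n e^{j2\pi nk/N}\Bigr|^2 . \nonumber
\end{align}
This is the squared modulus of an (unnormalized) inverse DFT of the sequence $\{a_n\}$ evaluated at frequency $k$. By Parseval's identity for the DFT (equivalently, expand the square and use $\sum_k e^{j2\pi(n-m)k/N}=N\delta_{n,m}$), summing over all $k\in\mathbb{Z}_N$ gives
\begin{align}
\sum_{k=0}^{N-1}g(k,q)=N\sum_{n=0}^{N-1}|a_n|^2=N\sum_{n=0}^{N-1}|p_n|^2|p_{\langle n-q\rangle_N}|^2 . \nonumber
\end{align}

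The remaining step uses the unit-amplitude pilot assumption. Since $|p_n|^2=w_n$, with $|s_i|^2=1$ on $I$ and $p_i=0$ off $I$, we have $|p_n|^2|p_{\langle n-q\rangle_N}|^2=w_nw_{\langle n-q\rangle_N}$. The sum above is therefore $N\sum_n w_nw_{\langle n-q\rangle_N}=N\,h(q)$, where we use the first form of $h$ given in the theorem. Substituting back, the $N h(q)$ terms cancel and the total equals $N^2$, independent of the pilot pattern and the pilot symbols.

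No step here is a serious obstacle once Theorem~\ref{theorem_1} is available. The only point that needs care is using the scalar-sum forms of $g$ and $h$ rather than the matrix forms, which makes Parseval immediate. One should also check that the $k=0$ term is included: for $q\neq0$ it is a genuine sidelobe, so the sum over all $k$ is exactly the sum of sidelobes at that Doppler cut.
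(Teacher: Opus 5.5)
Your proof is correct and follows essentially the same route as the paper. The paper also expands $|\sum_n p_n p^*_{\langle n-q\rangle_N} e^{j2\pi nk/N}|^2$ and uses $|p_n|^2=w_n$ so the diagonal terms cancel against $h(q)$, then removes the off-diagonal terms with $\sum_k e^{j2\pi(n-n')k/N}=0$; this is exactly your Parseval step, with the cancellation done before summing over $k$ rather than after.
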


\begin{proof}
\textcolor{myLightRed}{Using $|p_n|=|p_n|^2=w_n$, 
$\psi(k,q)$ can be expanded as}
\begin{align}\label{DPAF_expand}
\sum_{n\neq n'}p_np_{\langle n-q\rangle_N}^* e^{j2\pi (n-n')k/N}p_{n'}^*p_{\langle n'-q\rangle_N}+N.
\end{align}
Plugging $\sum_{k=0}^{N-1}e^{j2\pi(n- n')k/N}=0$ into the summation of \eqref{DPAF_expand} leads to \eqref{sum_of_sidelobe}.
\end{proof}


Note that \eqref{sum_of_sidelobe} is independent of the \textcolor{myLightPurple}{pilots,}  
revealing a 
theoretical 
limitation
\textcolor{myLightPurple}{that} we cannot optimize the sum of the expected sidelobe levels for any $q\neq0$ solely by adjusting the pilot 
design.

\subsection{Impact of Pilot Patterns on the Expected Squared DP-AF}

To explore the practical design space of pilot patterns, we turn to the following corollary: \newpage


\begin{corollary}
\textcolor{myLightBlue}{For OFDM signals}, if the unit-amplitude pilot pattern is cyclically shifted by $c$ positions, yielding a shifted pattern $p_{\langle n-c\rangle_N}$ and $w_{\langle n-c\rangle_N}$, the expected squared DP-AF remains invariant.
\end{corollary}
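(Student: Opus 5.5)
The plan is to use the closed-form expression in Theorem~\ref{theorem_1} and check that each of its three ingredients is unchanged under a cyclic shift of the pattern. For $q=0$ the expected squared DP-AF equals $N^2\delta_{k,0}+(\kappa-1)(N-L)$. This depends only on $N$, $L$ and $\kappa$. A cyclic shift preserves the pilot count $L$, and the data symbols keep the same distribution, so the $q=0$ cut is invariant at once. For $q\neq0$ we have $\psi(k,q)=g(k,q)+N-h(q)$, so it suffices to show that $g(k,q)$ and $h(q)$ are each invariant.

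Denote the shifted pilot vector and mask by $\tilde p_n=p_{\langle n-c\rangle_N}$ and $\tilde w_n=w_{\langle n-c\rangle_N}$. For $h(q)$, substitute $m=\langle n-c\rangle_N$ in $\sum_{n}\tilde w_n\tilde w_{\langle n-q\rangle_N}$. Since $\langle n-q\rangle_N-c\equiv m-q \pmod N$, the sum becomes $\sum_{m}w_mw_{\langle m-q\rangle_N}$, because the map $n\mapsto\langle n-c\rangle_N$ is a bijection of $\mathbb{Z}_N$. So $h(q)$ is unchanged.

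For $g(k,q)$, apply the same substitution to the first form in \eqref{f_g_h_relation}:
\begin{align*}
\sum_{n=0}^{N-1}\tilde p_n\tilde p^*_{\langle n-q\rangle_N}e^{j2\pi nk/N}
=\sum_{m=0}^{N-1}p_mp^*_{\langle m-q\rangle_N}e^{j2\pi (m+c)k/N}.
\end{align*}
Here we use that $e^{j2\pi nk/N}$ is $N$-periodic in $n$, so replacing $n$ by $m+c$ modulo $N$ is legitimate. The right-hand side equals $e^{j2\pi ck/N}$ times the original inner sum. This unimodular factor disappears under the squared modulus, so $g(k,q)$ is unchanged.

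Combining the three parts, $\psi(k,q)$ and hence the whole expected squared DP-AF in \eqref{dp_af_theorem} are identical for the original and shifted patterns. The only point needing care, and the main obstacle, is the index bookkeeping modulo $N$: one must confirm that the shift commutes with the Doppler offset $q$ inside the cyclic indices, and that the exponential's periodicity allows the reindexing. Alternatively, the same result can be phrased in matrix form. The shifted pilot vector is $\mathbf{R}_{N,c}\mathbf{p}$, and $\mathbf{F}_N\mathbf{R}_{N,c}\mathbf{F}_N^H=\mathbf{\Phi}^*_{N,c}$, so the pilot time-domain signal $\mathbf{F}_N^H\mathbf{p}$ acquires only a frequency modulation. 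The commutation identity $\mathbf{\Phi}_{N,q}\mathbf{R}_{N,k}=e^{-j2\pi qk/N}\mathbf{R}_{N,k}\mathbf{\Phi}_{N,q}$ then shows that the quadratic form changes only by a unit-modulus phase.
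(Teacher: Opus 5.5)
Your proof is correct and is essentially the paper's: both substitute the shifted pattern into $\psi(k,q)=g(k,q)+N-h(q)$ from Theorem~\ref{theorem_1}, reindex using $\langle\langle n-q\rangle_N-c\rangle_N=\langle\langle n-c\rangle_N-q\rangle_N$, and drop the unit-modulus factor $e^{j2\pi ck/N}$ under the squared modulus. The paper does that last step silently and you make it explicit, and your $q=0$ remark covers a case the paper leaves implicit. In your optional matrix aside, the identity that turns the frequency-domain shift $\mathbf{R}_{N,c}\mathbf{p}$ into a time-domain modulation is $\mathbf{F}_N^H\mathbf{R}_{N,c}\mathbf{F}_N=\mathbf{\Phi}_{N,c}$, not the time-shift identity $\mathbf{F}_N\mathbf{R}_{N,c}\mathbf{F}_N^H=\mathbf{\Phi}^*_{N,c}$ you quote, though the conclusion is unaffected.
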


\begin{proof}
\textcolor{myLightRed}{The expected squared DP-AF when $q\neq 0$ after shifting} the positions of pilots can be given as
\begin{align}
&\ \nonumber \left|\sum_{n=0}^{N-1}p_{\langle n-c\rangle_N}p_{\langle {\langle n-q\rangle_N}-c\rangle_N}^{*}\cdot e^{j2\pi nk/N}\right|^2 + N\\&\nonumber-\sum_{n=0}^{N-1}w_{\langle n-c\rangle_N}w_{\langle {\langle n-q\rangle_N}-c\rangle_N}\\&\nonumber=\left|\sum_{n=0}^{N-1}p_{\langle n-c\rangle_N}p_{\langle {\langle n-c\rangle_N}-q\rangle_N}^{*}\cdot e^{j2\pi {\langle n-c\rangle_N}k/N}\right|^2 + N\\&-\sum_{n=0}^{N-1}w_{\langle n-c\rangle_N}w_{\langle {\langle n-c\rangle_N}-q\rangle_N},
\end{align}\rm
which equals that in the unshifted case.
\end{proof}

This result provides flexibility in pilot placement without altering the sidelobe behavior. To further gain insights into the impact of pilot patterns, we derive the attainable bounds:


\begin{proposition}
For the OFDM signals where unit-amplitude pilots are embedded, under arbitrary pilot design, when $q\neq0$, the lower bound of the expected squared DP-AF equals $(N-L)$:
\begin{align}\label{lower_bound}\mathbb{E}\left(|\mathcal{X}^{\mathrm{DP}}(k,q)|^2\right)\geq N-h(q)\geq N-L,\end{align}
and the upper bound when $q\neq0$ is $(N+L^2-L)$:
\begin{align}\label{upper_bound}\nonumber\mathbb{E}\left(|\mathcal{X}^{\mathrm{DP}}(k,q)|^2\right)&\leq N+ h^2(q)-h(q)\\&\leq N+L^2-L.\end{align}
\end{proposition}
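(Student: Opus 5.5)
Our plan is to work directly from the representation in Theorem~\ref{theorem_1}: for $q\neq0$ we have $\mathbb{E}(|\mathcal{X}^{\mathrm{DP}}(k,q)|^2)=\psi(k,q)=g(k,q)+N-h(q)$. Both bounds then come from bounding $g(k,q)$ in terms of $h(q)$, and then bounding $h(q)$ in terms of $L$. Each step relies only on the unit-amplitude property $|p_n|=|p_n|^2=w_n$ and on $w_n\in\{0,1\}$.

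\emph{Lower bound.} Since $g(k,q)=\bigl|\sum_n p_np^*_{\langle n-q\rangle_N}e^{j2\pi nk/N}\bigr|^2\ge0$, we immediately get $\psi(k,q)\ge N-h(q)$. Next, $h(q)=\sum_{n}w_nw_{\langle n-q\rangle_N}$ counts the indices $n$ for which both $n$ and $\langle n-q\rangle_N$ lie in $I$. This is the cardinality of $I\cap(I+q)$, so it is at most $\mathrm{Card}(I)=L$. Hence $N-h(q)\ge N-L$. We also note that $h(q)$ is real and nonnegative, so the matrix form $\mathbf{w}^H\mathbf{F}_N\mathbf{\Phi}_{N,q}\mathbf{F}_N^H\mathbf{w}$ coincides with this cyclic autocorrelation. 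This identification is used in Theorem~\ref{theorem_1} and we take it as given.

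\emph{Upper bound.} We apply the triangle inequality inside $g$. Each summand $p_np^*_{\langle n-q\rangle_N}e^{j2\pi nk/N}$ has modulus $w_nw_{\langle n-q\rangle_N}$, so
\begin{align}
g(k,q)\le\Bigl(\sum_{n=0}^{N-1}w_nw_{\langle n-q\rangle_N}\Bigr)^2=h^2(q),
\end{align}
which gives $\psi(k,q)\le N+h^2(q)-h(q)$. The final step is to maximize $x^2-x$ over the admissible values $x=h(q)\in\{0,1,\dots,L\}$. Because $x^2-x$ is nondecreasing for integers $x\ge0$ (and equals $0$ at both $x=0$ and $x=1$), the maximum is attained at $x=L$, giving $N+L^2-L$.

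\emph{Attainability and main obstacle.} The inequalities themselves are routine. The only subtle point is the word ``attainable'' in the statement, which asks us to exhibit configurations achieving each bound with equality. Equality in the lower bound requires $g(k,q)=0$ and $h(q)=L$. The condition $h(q)=L$ forces $I+q=I$, i.e.\ a periodic pattern. For the upper bound we additionally need all the phases $p_np^*_{\langle n-q\rangle_N}e^{j2\pi nk/N}$ to be aligned.

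We would resolve this with an explicit example. Take $L\mid N$, equally spaced pilots $I=\{0,N/L,2N/L,\dots\}$, and $q$ a multiple of $N/L$, so that $h(q)=L$. With chirp (ZC-type) pilot symbols, the products $p_np^*_{n-q}$ become a linear phase in the pilot index. Choosing $k$ to cancel that phase yields $g=L^2$, which attains the upper bound. Choosing $k$ so that the phases form a full set of $L$-th roots of unity yields $g=0$, which attains the lower bound. Checking this phase bookkeeping is the step we expect to need care. We would also mention the trivial example with a constant pilot and $k=0$, which attains the upper bound directly.
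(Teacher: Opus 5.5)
Your proof is correct and follows the paper's argument. The lower bound comes from $g(k,q)\ge 0$ together with $h(q)=\mathrm{Card}[I\cap(I+q)]\le L$, and the upper bound from the triangle inequality $g(k,q)\le h^2(q)$ followed by monotonicity of $x^2-x$ over the integers $0,\dots,L$ (a step the paper leaves implicit). Your attainability sketch is not part of the paper's proof here, but it matches the equally spaced chirp construction the paper gives later in Proposition~\ref{proposition_3}. One caution: your constant-pilot, $k=0$ example reaches $N+L^2-L$ only when the pattern satisfies $I+q=I$; otherwise it attains only the intermediate bound $N+h^2(q)-h(q)$.
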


\begin{proof}
Using $|p_n|=|p_n|^2=w_n$, it can be proved that \begin{align}g(k,q)&\leq\left(\sum_{n=0}^{N-1}\left|p_np_{\langle n-q\rangle_N}^*\cdot e^{j2\pi nk/N}\right|\right)^2
=h^2(q).\end{align}
Furthermore, combining this with $h(q)\leq L$ yields the upper bound. The lower bound follows from $g(k,q)\geq0$.
\end{proof}

Recalling that the ESL at $q=0$ is $(\kappa-1)(N-L)$, we note that if $L<N$, for the \textcolor{myLightPurple}{super-Gaussian constellation}, this ESL lies above the lower bound, whereas for the \textcolor{myLightPurple}{sub-Gaussian constellation}, 
\textcolor{myLightPurple}{this ESL} 
attains the global minimum.

The upper bound is achieved only when $g(k,q)=L^2$ and $h(q)=L$, while the lower bound is achieved only when $g(k,q)=0$ and $h(q)=L$. \textcolor{myLightRed}{Note that} $h(q)=L$ means the pilot pattern mask $\mathbf{w}$ is cyclic-shift invariant under Doppler index $q$. For $q\neq0$, define the set of Doppler indices for which it is possible to reach the upper and lower bounds as
\begin{align}
    \mathcal{D}_h:=\left\{q\in\mathbb{Z}_N,q\neq0|h(q)=L\right\},
\end{align}
the set of delay-Doppler index pairs for which the lower bound is attained as 
\begin{align}
    \mathcal{L}_{g,h}:=\left\{(k,q)\in\mathbb{Z}_N,q\neq0|q\in\mathcal{D}_h,\ g(k,q)=0\right\},
\end{align}
and the set of delay-Doppler index pairs for which the upper bound is attained as 
\begin{align}
    \mathcal{U}_{g,h}:=\left\{(k,q)\in\mathbb{Z}_N,q\neq0|q\in\mathcal{D}_h,\ g(k,q)=L^2\right\}.
\end{align}

We first reveal that when there exist sidelobes reaching the upper or lower bounds, the pilot pattern must be periodic. Furthermore, when the number of Doppler indices that can generate sidelobes reaching the upper or lower bounds is maximized, the pilot pattern is equally spaced.

\begin{proposition} \label{proposition_2}
\textcolor{myLightRed}{The set $\mathcal{D}_h\neq\varnothing$ if and only if $\mathrm{gcd}(N,L)>1$} and $\mathbf{w}$ has period $T_\mathbf{w}=N/\lambda_{N,L}$, i.e., $w_n=w_{\langle n-T_\mathbf{w}\rangle_N},\ \forall n\in \mathbb{Z}_N$, where $\lambda_{N,L}>1$ is a common divisor of $N$ and $L$. 

In this case, $\mathcal{D}_h$ takes the following form:
\begin{align}\label{D_h_form}
\mathcal{D}_h=\{iT_\mathbf{w}|i=1,...,\lambda_{N,L}-1\}.\end{align}

It follows that $\mathrm{Card}\left(\mathcal{D}_h\right)\leq L-1$; the equality is satisfied only when $L\mid N$ and the pilot pattern is equally spaced with a period of $N/L$.
\end{proposition}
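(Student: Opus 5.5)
The whole argument rests on rewriting $h(q)=\sum_{n}w_nw_{\langle n-q\rangle_N}$ as the number of pilot positions $n\in I$ with $\langle n-q\rangle_N\in I$. Then $h(q)=L$ holds iff $I-q\subseteq I$ (mod $N$). Since $I$ is finite and translation is injective, this is equivalent to $I-q=I$, i.e. $w_n=w_{\langle n-q\rangle_N}$ for all $n$. So $\mathcal{D}_h\cup\{0\}$ is exactly the stabilizer of the set $I$ under the translation action of $\mathbb{Z}_N$.

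The plan proceeds in four steps.

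\textbf{Step 1: identify the stabilizer.} The stabilizer $H=\{q\in\mathbb{Z}_N: I-q=I\}$ is a subgroup of $\mathbb{Z}_N$: it contains $0$ and is closed under addition and negation, because composing two translations that fix $I$ gives a translation that fixes $I$. Every subgroup of the cyclic group $\mathbb{Z}_N$ is cyclic, of the form $H=T\mathbb{Z}_N=\{0,T,2T,\dots\}$ with $T\mid N$ and $|H|=\lambda:=N/T$. The smallest positive element $T$ of $H$ is then the minimal period $T_\mathbf{w}$ of $\mathbf{w}$. This gives \eqref{D_h_form} directly:
\[
\mathcal{D}_h=H\setminus\{0\}=\{iT_\mathbf{w}: i=1,\dots,\lambda-1\}.
\]

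\textbf{Step 2: $\lambda$ divides $L$.} The group $H$ acts freely on $I$ by translation, since $n-q=n$ forces $q=0$. Hence $I$ is a disjoint union of $H$-cosets, each of size $\lambda$, and so $\lambda\mid L$. Together with $\lambda\mid N$, this makes $\lambda$ a common divisor of $N$ and $L$. Moreover $\mathcal{D}_h\neq\varnothing$ iff $|H|>1$ iff $\lambda>1$, which requires $\gcd(N,L)\geq\lambda>1$.

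\textbf{Step 3: the equivalence.} The forward direction is Steps 1 and 2. For the converse, assume $\mathbf{w}$ has period $T_\mathbf{w}=N/\lambda$ for some common divisor $\lambda>1$. Then $T_\mathbf{w}\neq 0$ in $\mathbb{Z}_N$ and $h(T_\mathbf{w})=L$, so $T_\mathbf{w}\in\mathcal{D}_h$ and the set is nonempty.

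A wording subtlety needs care here. The statement "iff $\gcd(N,L)>1$ and $\mathbf{w}$ is periodic" should be read as requiring both conditions jointly. Indeed $\gcd(N,L)>1$ alone does not suffice: take $N=4$, $L=2$, $I=\{0,1\}$, which has no nontrivial period. I would state this reading explicitly rather than try to prove the stronger claim. I would also remark that if $\mathbf{w}$ has period $N/\lambda'$ for a divisor $\lambda'$ of $\lambda$, the form \eqref{D_h_form} holds with the maximal such $\lambda$, i.e. the minimal period.

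\textbf{Step 4: the cardinality bound.} We have $\mathrm{Card}(\mathcal{D}_h)=\lambda-1$, and $\lambda\mid L$ gives $\lambda\leq L$, so $\mathrm{Card}(\mathcal{D}_h)\leq L-1$. Equality forces $\lambda=L$. Then $I$ is a single coset $i_0+H$ with $H=(N/L)\mathbb{Z}_N$, which requires $L\mid N$. Thus $I=\{i_0+m N/L\}$ is equally spaced with period $N/L$. Conversely, such a pattern gives $H=(N/L)\mathbb{Z}_N$ and hence $L-1$ elements.

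The main obstacle is Step 1: recognizing that $h(q)=L$ is equivalent to exact set invariance, and establishing the subgroup (cyclic-stabilizer) structure. Once that is in place, the orbit-counting argument in Steps 2 and 4 is routine.
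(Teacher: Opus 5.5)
Your proof is correct and follows the paper's argument. The paper likewise reduces $h(q)=L$ to exact invariance $\mathcal{W}\pm q=\mathcal{W}$ and shows $\mathcal{D}_h\cup\{0\}$ is closed under addition and subtraction. It then uses the division algorithm to show this set consists of the multiples of its minimal positive element $q_{min}$, with $(\mathrm{Card}(\mathcal{D}_h)+1)q_{min}=N$, and counts the translates $a+mq_{min}$ of each pilot to get $\lambda\mid L$; these are exactly your cyclic-subgroup and free-orbit steps, written out by hand. The only difference is the converse: the paper builds an explicit periodic pattern for each common divisor $\lambda>1$, whereas you check that a given periodic $\mathbf{w}$ has $T_\mathbf{w}\in\mathcal{D}_h$. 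Your reading of the statement (both conditions jointly, with $T_\mathbf{w}$ the minimal period) is the precise one, and your example $N=4$, $L=2$, $I=\{0,1\}$ correctly shows that $\gcd(N,L)>1$ alone does not suffice.
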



\begin{proof}
    See Appendix \ref{proof_of_proposition_2}.
\end{proof}

Based on the above result, we \textcolor{myLightRed}{show in the following} that, when the equally spaced pilot pattern is used and chirp sequences are adopted as pilot symbols, the numbers of sidelobes that attain the upper and lower bounds are maximized simultaneously.

\begin{proposition}\label{proposition_3}
\textcolor{myLightRed}{For the number of sidelobes that attain the} lower bound, we have:
\begin{align}\label{LB_max}
\mathrm{Card}\left(\mathcal{L}_{g,h}\right)\leq \frac{N}{L}(L-1)\cdot\mathrm{Card}\left(\mathcal{D}_h\right)\leq\frac{N}{L}(L-1)^2,
\end{align}
and for the number of sidelobes that attain the upper bound, we have:
\begin{align}\label{UB_max}
\mathrm{Card}\left(\mathcal{U}_{g,h}\right)\leq \frac{N}{L}\cdot\mathrm{Card}\left(\mathcal{D}_h\right)\leq\frac{N}{L}(L-1).
\end{align}
The two inequalities in \eqref{LB_max} are tight only when the pilot pattern is equally spaced with a period of $N/L\in \mathbb{Z}$ and the pilot symbols take the following form of chirp sequence: \rm
\begin{equation}\label{chirp_main}
    p_{l\frac{N}{L}}=\begin{cases}
        A\cdot e^{\left[-j\frac{\pi u}{L}l(l+1)\right]}\cdot e^{\left(j\frac{2\pi\nu}{L}l\right)},&\text{if $L$ is odd,}\\ 
        A\cdot e^{\left(-j\frac{\pi u}{L}l^2\right)}\cdot e^{\left(j\frac{2\pi\nu}{L}l\right)},&\text{if $L$ is even,}
    \end{cases}
\end{equation}
\it where $l=0,...,L-1$, $A\in\mathbb{C}$, $|A|=1$, and $u,\nu \in \mathbb{Z}$. Similarly, those in \eqref{UB_max} are tight only under the same conditions. The corresponding expected squared DP-AF is \rm
\begin{equation}\begin{aligned}\label{chirp_dp_af}\ &\ \mathbb{E}\left(|\mathcal{X}^{\mathrm{DP}}(k,q)|^2\right)=\\&\begin{cases} N^2\delta_{k,0}+(\kappa-1)(N-L), & \text{if}\ q=0,\\
N+L^2-L,\  & \text{if}\ q=m\frac{N}{L},\ L\mid(k-um),
\\N-L,\   & \text{if}\ q=m\frac{N}{L},\ L\nmid(k-um),\\
N,  &\text{else},\end{cases}\end{aligned}\end{equation} \it 
where $m = 1,2, ..., L-1$. \eqref{chirp_dp_af} is independent of $A$ and $\nu$. Notably, when $A=1$ and $\nu=0$, the chirp sequence in \eqref{chirp_main} reduces to the ZC sequence. 
\end{proposition}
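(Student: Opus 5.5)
The plan is to fix a Doppler index $q\in\mathcal{D}_h$ and study $g(k,q)$ as a function of $k$ alone. Since $h(q)=L$, the pilot set $I$ is invariant under the shift by $q$. Hence $a_n:=p_np^*_{\langle n-q\rangle_N}$ has unit modulus for $n\in I$ and vanishes otherwise. Then $g(k,q)=|G_q(k)|^2$ with $G_q(k)=\sum_{n\in I}a_ne^{j2\pi nk/N}$, which is (up to scaling) the DFT of a sequence supported on exactly $L$ points.

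\textbf{Energy identity.} Combining the Corollary on the constant delay-sum with Theorem~\ref{theorem_1} gives $\sum_k[g(k,q)+N-L]=N^2$, so $\sum_k g(k,q)=NL$. (This is Parseval directly.)

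\textbf{Per-Doppler counts.}
\begin{itemize}
\item Upper-bound sidelobes need $g=L^2$. The energy identity then allows at most $NL/L^2=N/L$ such delays per $q$.
\item Lower-bound sidelobes need $g=0$, so I will bound the zeros of $G_q$. I plan to invoke the discrete uncertainty principle (Donoho--Stark), $|\mathrm{supp}\,b|\cdot|\mathrm{supp}\,\hat b|\ge N$. It gives $|\mathrm{supp}\,G_q|\ge N/L$, hence at most $N-N/L=\frac{N}{L}(L-1)$ zeros.
\item Summing over $q\in\mathcal{D}_h$ gives the first inequalities in \eqref{LB_max} and \eqref{UB_max}. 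The second inequalities follow from $\mathrm{Card}(\mathcal{D}_h)\le L-1$ in Proposition~\ref{proposition_2}.
\end{itemize}

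\textbf{Tightness.} Equality in the second inequalities forces, by Proposition~\ref{proposition_2}, the equally spaced pattern $I=\{lN/L\}$ with $\mathcal{D}_h=\{mN/L\}_{m=1}^{L-1}$. Then $G_q(k)=\sum_{l=0}^{L-1}\tilde a_le^{j2\pi lk/L}$ depends only on $k$ mod $L$.

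Equality in either first inequality, for every $m$, means the length-$L$ sequence $\tilde a^{(m)}_l=\tilde p_l\tilde p^*_{\langle l-m\rangle_L}$ has a DFT with a single nonzero bin. Equivalently, $\tilde a^{(m)}_l=c_me^{-j2\pi r_ml/L}$ is a pure linear phase. The two conditions coincide because the Donoho--Stark equality case and the concentration of energy $NL$ on $N/L$ delays are both exactly this one-bin condition.

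\textbf{Main obstacle: the functional equation.} Taking $m=1$, the ratio $\tilde p_l/\tilde p_{l-1}$ must be a linear phase in $l$, so $\tilde p_l$ has a quadratic phase. Cyclic consistency is the delicate part. The product of the $L$ ratios around the cycle must equal $1$, and $l\mapsto\tilde p_l$ must be well defined modulo $L$. This forces the quadratic coefficient into the form $\pi u l^2/L$ for $L$ even and $\pi u l(l+1)/L$ for $L$ odd, plus an arbitrary linear term $2\pi\nu l/L$ and a constant $A$, which is \eqref{chirp_main}.

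I will then verify the converse directly. For every $m$, $\tilde p_l\tilde p^*_{l-m}$ is a linear phase whose frequency places the single peak at $k\equiv um \pmod L$. The factors $A$ and $\nu$ only contribute a constant phase, so they cancel in $|\cdot|^2$.

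\textbf{Evaluating \eqref{chirp_dp_af}.} For $q=mN/L$:
\begin{itemize}
\item $g=L^2$ when $L\mid(k-um)$, giving $N+L^2-L$;
\item $g=0$ otherwise, giving $N-L$.
\end{itemize}
For $q$ not a multiple of $N/L$, the shifted comb misses all pilots, so $h(q)=0$ and $g=0$, giving $\psi=N$. The $q=0$ row is Theorem~\ref{theorem_1}.
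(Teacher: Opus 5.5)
Your proposal is correct and follows the paper's proof essentially step by step. The fixed delay-sum $\sum_k g(k,q)=NL$ caps the upper-bound count per Doppler bin, and Proposition 2 bounds $\mathrm{Card}(\mathcal{D}_h)$ and forces the equally spaced pattern. The single-bin condition on the length-$L$ DFT at $m=1$ then gives a first-order recursion. Its cyclic closure, $c_1^L=e^{j\pi u(L-1)}$, constrains the linear-phase term (the quadratic coefficient is $-\pi u/L$ for both parities) and yields the parity-dependent chirp, after which you check the converse directly.

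The only deviation is that you bound the number of zeros per Doppler bin with the Donoho--Stark uncertainty principle. The paper gets the same $N-N/L$ bound from the fixed sum together with $g(k,q)\le L^2$, which is exactly the estimate behind Donoho--Stark, so the two routes are equivalent.
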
 

\begin{proof}
    See Appendix \ref{proof_of_proposition_3}.
\end{proof}

Under the above pilot design, the $N$ sidelobes corresponding to any $q\in \mathcal{D}_h=\{i(N/L)|i=1,...,L-1\}$ are composed of $N/L$ sidelobes that attain the upper bound, and $(N-N/L)$ sidelobes that attain the lower bound. 

From \eqref{sum_of_sidelobe}, no pilot design can minimize the sum of sidelobes of the DP-AF for a given number of pilots. A natural question is whether, for a given number of pilots, one can design the pilot symbols and their pattern to minimize the sidelobe level in a specified delay-Doppler region. From \eqref{chirp_dp_af}, such minimization is achievable with equally spaced chirp 
\textcolor{myLightPurple}{pilots}, provided that the area of interest is restricted to certain 1D subsets, for instance, certain adjacent sidelobes corresponding to a non-zero Doppler index $q=m(N/L)$. For realistic deployments, though, one typically prefers to confine the minimal sidelobes to a connected two-dimensional area made up of neighboring delay-Doppler bins, which leads to the idea of the low-ambiguity zone \cite{ye2022low}. Pilot designs constructed in this way can therefore be regarded as locally optimal within that prescribed 
\textcolor{myLightPurple}{2D} 
region. As will be shown, however, such locally optimal pilot designs generally do not exist.



\begin{corollary}
   
For any $q\in\mathbb{Z}_N$, $q$ and $\langle q+1\rangle_N$ both belong to $\mathcal{D}_h\cup\{0\}$ only when $L=N$.

\end{corollary}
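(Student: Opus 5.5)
The plan is to work directly from the characterization of $\mathcal{D}_h$ in Proposition \ref{proposition_2}, together with the elementary fact that $h(q)=L$ means exactly that the support set $I$ of $\mathbf{w}$ is invariant under the cyclic shift by $q$. In other words, $I+q\equiv I \pmod N$.

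The key observation is that the set of $q\in\mathbb{Z}_N$ with $h(q)=L$ is precisely $\mathcal{D}_h\cup\{0\}$, which is the stabilizer of $I$ under translation. This stabilizer is a subgroup of $\mathbb{Z}_N$. Indeed, if $I+q=I$ and $I+q'=I$, then $I+(q-q')=I$. This is consistent with \eqref{D_h_form}, where $\mathcal{D}_h\cup\{0\}=\{iT_{\mathbf{w}}\}$ is the subgroup generated by $T_{\mathbf{w}}$.

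Now suppose $q$ and $\langle q+1\rangle_N$ both lie in $\mathcal{D}_h\cup\{0\}$. By the subgroup property, their difference $1$ also lies in it. Hence $I+1=I$, so $w_n=w_{\langle n-1\rangle_N}$ for all $n$. The mask $\mathbf{w}$ is therefore constant. Because $L\ge 1$ pilots exist, at least one entry equals one, so every entry equals one and $L=N$.

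Alternatively, in the paper's own language, \eqref{D_h_form} shows that consecutive elements of $\mathcal{D}_h\cup\{0\}$ differ by multiples of $T_{\mathbf{w}}=N/\lambda_{N,L}$. A difference of $1$ then forces $T_{\mathbf{w}}=1$, i.e. $\lambda_{N,L}=N$. Since $\lambda_{N,L}$ divides $L\le N$, this gives $L=N$.

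For the converse direction, which shows that the condition is not vacuous: when $L=N$, the mask satisfies $\mathbf{w}=\mathbf{1}$ and $h(q)=N=L$ for every $q$. So every $q$ belongs to $\mathcal{D}_h\cup\{0\}$.

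The only delicate point is the degenerate case where $\mathcal{D}_h=\varnothing$ and both $q$ and $q+1$ would have to equal $0$. This is impossible for $N>1$, so the group argument covers all cases.

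I expect no real obstacle here. The main care needed is to justify the subgroup closure directly from $h(q)=L \iff I+q=I$. This follows because $h(q)=\sum_n w_nw_{\langle n-q\rangle_N}$ counts the elements of $I\cap(I+q)$, and that count equals $L$ exactly when $I+q=I$.
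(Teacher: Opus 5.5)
Your proof is correct and is essentially the paper's argument. The paper reads off from \eqref{D_h_form} that a difference of $1$ forces $T_{\mathbf{w}}=1$, hence $\lambda_{N,L}=N$ and $L=N$, which is exactly your second paragraph; your direct route through closure of $\mathcal{D}_h\cup\{0\}$ under subtraction rests on the same fact the paper proves in Appendix~\ref{proof_of_proposition_2} to derive \eqref{D_h_form}, and your explicit use of $L\ge 1$ is necessary, since for $L=0$ one has $h\equiv 0=L$ and the claim would fail.
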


\begin{proof}
If $q,\langle q+1\rangle_N\in\mathcal{D}_h\cup\{0\}$, from \eqref{D_h_form} it can be deduced that $T_\mathbf{w}=1$ and $\lambda_{N,L}=N$, thus $L=N$. 
\end{proof}

Therefore, when $L<N$ there does not exist a $2\times2$ delay-Doppler region, or a low-ambiguity zone that achieves the minimum sidelobes.

\subsection{Statistical Characterization of the FST-AF}
The analytical expression for the expected squared FST-AF is established in the subsequent theorem.

\begin{thm} \label{theorem_2}
For the OFDM signals in which unit-amplitude pilots are embedded, the expected squared FST-AF \textcolor{myLightPurple}{is}
\begin{equation}
    \begin{aligned}\label{eq:expect_fst}&\ \textcolor{myLightPurple}{\mathbb{E}\left(|\mathcal{X}^{\mathrm{FST}}(k,q)|^2\right)=M^2N^2\delta_{k,0}\delta_{q,0}+(\kappa-1)(MN-L).}\end{aligned}
\end{equation}
\end{thm}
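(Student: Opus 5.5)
Our plan is to start from the compact representation \eqref{define_FSTAF}, $\mathcal{X}^{\mathrm{FST}}=\sqrt{MN}\,\mathbf{F}_N^H|\mathbf{S}|^2\mathbf{F}_M$. Entrywise this reads
\begin{align}
\mathcal{X}^{\mathrm{FST}}(k,q)=\sum_{n=0}^{N-1}\sum_{m=0}^{M-1}|s_{n,m}|^2 e^{j2\pi nk/N}e^{-j2\pi mq/M},
\end{align}
up to the sign conventions of the DFT, which do not matter after taking squared modulus. The FST-AF is therefore a 2D DFT of the nonnegative array $a_{n,m}:=|s_{n,m}|^2$. Stacking the index pairs $(n,m)$ into a single index $r$ with phase $\theta_r(k,q)$, we have $|\mathcal{X}^{\mathrm{FST}}(k,q)|^2=\sum_{r,r'}a_r a_{r'}e^{j(\theta_r-\theta_{r'})}$. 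The whole computation thus reduces to second moments of the $a_r$.

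The key step is to compute $\mathbb{E}(a_r a_{r'})$. This is simpler than in Theorem \ref{theorem_1}, because only the moduli $|s_r|^2$ appear, so no pseudo-variance or third-order conditions are needed.
\begin{itemize}
\item For a pilot $r$, $a_r=1$ deterministically.
\item For data, $\mathbb{E}(a_r)=1$ and $\mathbb{E}(a_r^2)=\kappa$.
\item Distinct entries are independent, so $\mathbb{E}(a_ra_{r'})=1$ for $r\neq r'$.
\item On the diagonal, $\mathbb{E}(a_r^2)$ equals $1$ for pilots and $\kappa$ for data.
\end{itemize}
Hence $\mathbb{E}(a_ra_{r'})=1+(\kappa-1)\delta_{r,r'}\mathbb{1}[r\text{ is data}]$. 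Equivalently, we decompose $a_r=1+\epsilon_r$, where $\epsilon_r$ is zero-mean, vanishes on the $L$ pilot positions, and has variance $\kappa-1$ on the $MN-L$ data positions.

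Substituting this gives two terms. The all-ones part contributes $\big|\sum_r e^{j\theta_r}\big|^2=|MN\,\delta_{k,0}\delta_{q,0}|^2=M^2N^2\delta_{k,0}\delta_{q,0}$, by orthogonality of the complex exponentials over $\mathbb{Z}_N\times\mathbb{Z}_M$. The cross terms between $1$ and $\epsilon$ vanish in expectation. The diagonal fluctuation part contributes $\sum_{r\,\text{data}}(\kappa-1)|e^{j\theta_r}|^2=(\kappa-1)(MN-L)$. Together these give \eqref{eq:expect_fst}. The result does not depend on $(k,q)$ away from the origin, and it does not depend on pilot values or placement, since only $|p|^2=1$ enters.

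We do not expect a real obstacle. The only care needed is bookkeeping for the normalization in \eqref{define_FSTAF}: the factor $\sqrt{MN}$ combined with the $1/\sqrt{N}$ and $1/\sqrt{M}$ in the normalized DFT matrices must produce unnormalized exponential sums. This is what makes the mainlobe $M^2N^2$, consistent with $|\mathbf{x}^H\mathbf{x}|^2$. As a sanity check, the $L=0$ case should reproduce the known result of \cite{zhang2025discrete}.
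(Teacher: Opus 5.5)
Your proposal is correct and follows essentially the same route as the paper. Both write $\mathcal{X}^{\mathrm{FST}}(k,q)$ as an unnormalized 2D exponential sum of $|s_{n,m}|^2$ and use $\mathbb{E}(|s_{n,m}|^2|s_{l,r}|^2)=1$ for distinct entries, with $\kappa$ (data) or $1$ (pilot) on the diagonal; both then complete the exponential sum to obtain $M^2N^2\delta_{k,0}\delta_{q,0}$, and your split $a_r=1+\epsilon_r$ is just a tidier packaging of the paper's step of adding and subtracting the diagonal terms, which gives $\kappa(MN-L)+L-MN=(\kappa-1)(MN-L)$.
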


\begin{proof}
    \textcolor{myLightRed}{See Appendix \ref{proof_of_theorem_2}.}
\end{proof}

The ESL associated with the FST-AF is
$(\kappa-1)(MN-L)$, which
equals zero when $L$ equals the total number of time-frequency resources $MN$, or the kurtosis $\kappa = 1$.

An important observation within the FST-AF model is that the expected sidelobe levels are identical and independent of the pilot patterns and pilot symbols. This result suggests that the small-Doppler assumption in 2D OFDM signal models may fail to properly capture the effect of pilots on realistic sensing performance. The EISL of FST-AF is expressed as:
\begin{align}
\Gamma(\mathcal{X}^{\mathrm{FST}}) & =(\kappa-1)(MN-1)(MN-L),
\end{align}
which is dictated by the pilot count, the constellation kurtosis, and the number of time-frequency resources.

\section{Case Study on Using ZC Sequences as Pilots}\label{section_4}

The ZC sequences are commonly used as pilot symbols, including the sounding reference signals \textcolor{myLightPurple}{(SRS)} 
\cite{cha20255g} 
and the preamble of the physical random access channel \textcolor{myLightPurple}{(PRACH)} 
\cite{tosun2026coordinated} 
in 5G new radio (NR) systems. \newpage

The ideal correlation properties \textcolor{myLightPurple}{of ZC sequences} make them particularly attractive for use in channel estimation, synchronization and \textcolor{myLightBlue}{radar probing tasks}. In communication protocols, odd-length ZC sequences are typically employed, and take the form
\textcolor{myLightPurple}{$z_l = e^{-j\frac{\pi u }{L_{z}}l(l+1)},\ 0 \leq l \leq L_{z}-1,$}
where $L_{z}$ is the sequence length, and $u$ is the root index, usually chosen \textcolor{myLightRed}{co-prime} to $L_{z}$ \cite{pitaval2020overcoming}. Assume the number of \textcolor{myLightPurple}{pilots} 
equals the full sequence length, i.e., $L = L_{z}$. The subcarrier assignment is given by $z_l  = p_{i_l}$ for all $l$, where the $L$ subcarrier indices 
designated 
for pilots are 
first 
determined as $i_0 < i_1 <...< i_{L-1}$. This ordered arrangement aligns the 
ZC 
sequence with the ascending pilot subcarrier indices \cite{agli2025zadoff}, thereby simplifying the extraction and ordering of received pilots.

\begin{figure}[t]
    \centering
    \includegraphics[width=0.9\linewidth]{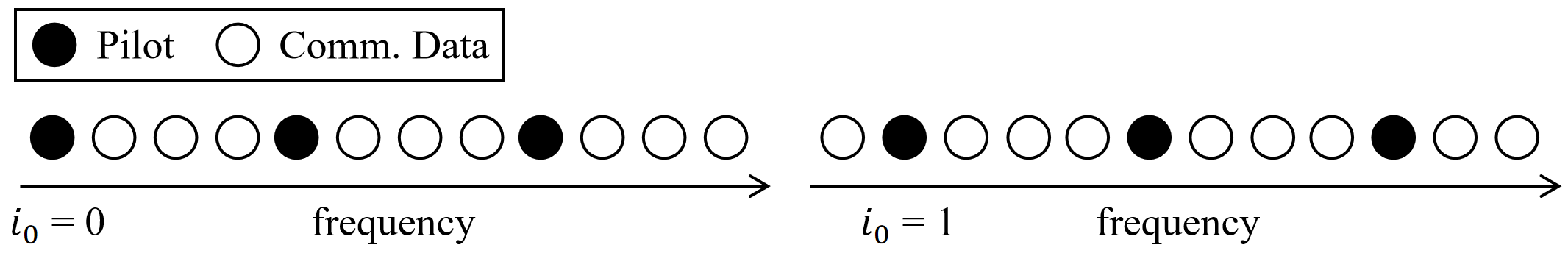}
    \caption{The pilot pattern corresponding to equally spaced placement, where $N=12$, $L=3$.}
    \label{fig_equi_pattern}
\end{figure}

\begin{figure}[t] 
    \centering
    \begin{subfigure}[t]{.49\linewidth}
    \centering
    \includegraphics[width=1\linewidth]{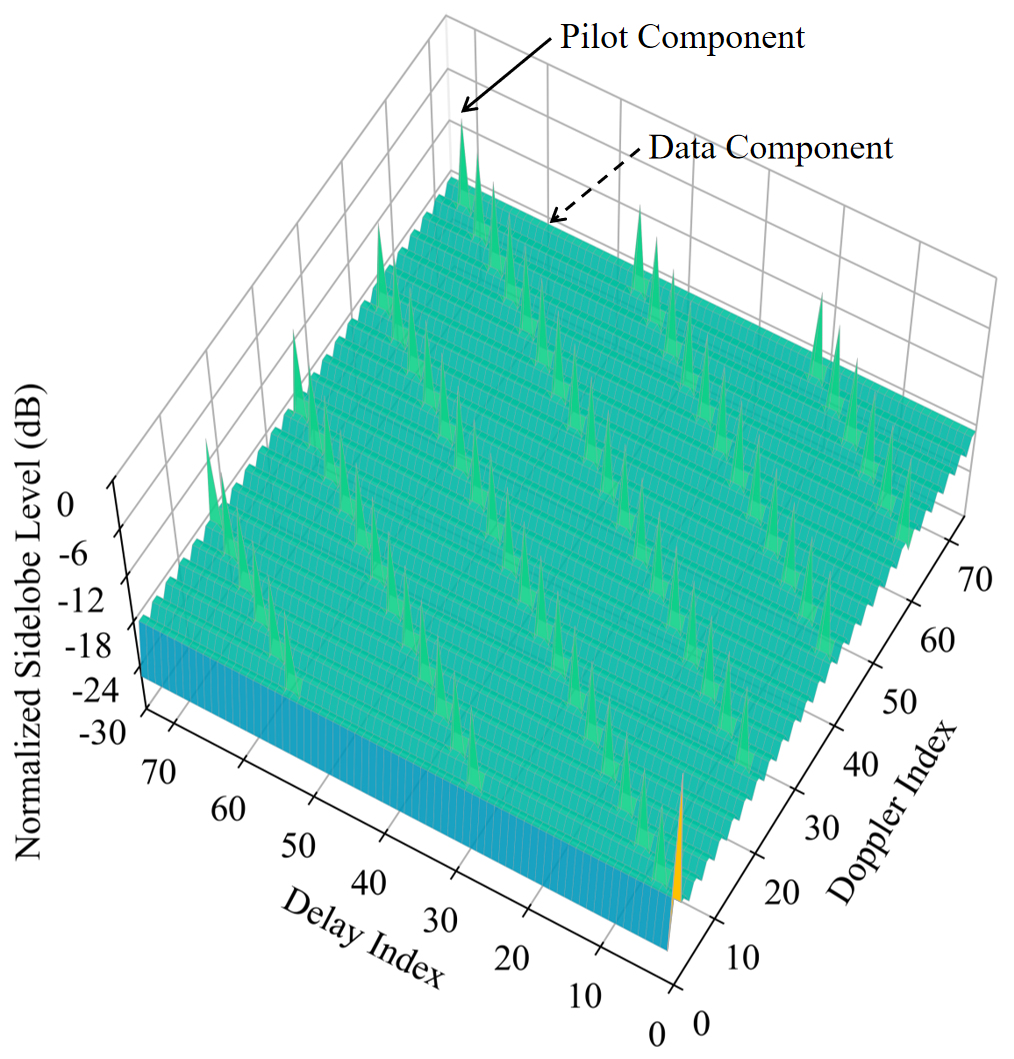}
    \caption{\textcolor{myLightBlue}{The case when $L=25$.}}
    \label{1_1_comb_theo_dpaf_3d_N_75_L_25_u_4}
    \end{subfigure}
    \centering
    \begin{subfigure}[t]{.49\linewidth}
    \centering
    \includegraphics[width=1\linewidth]{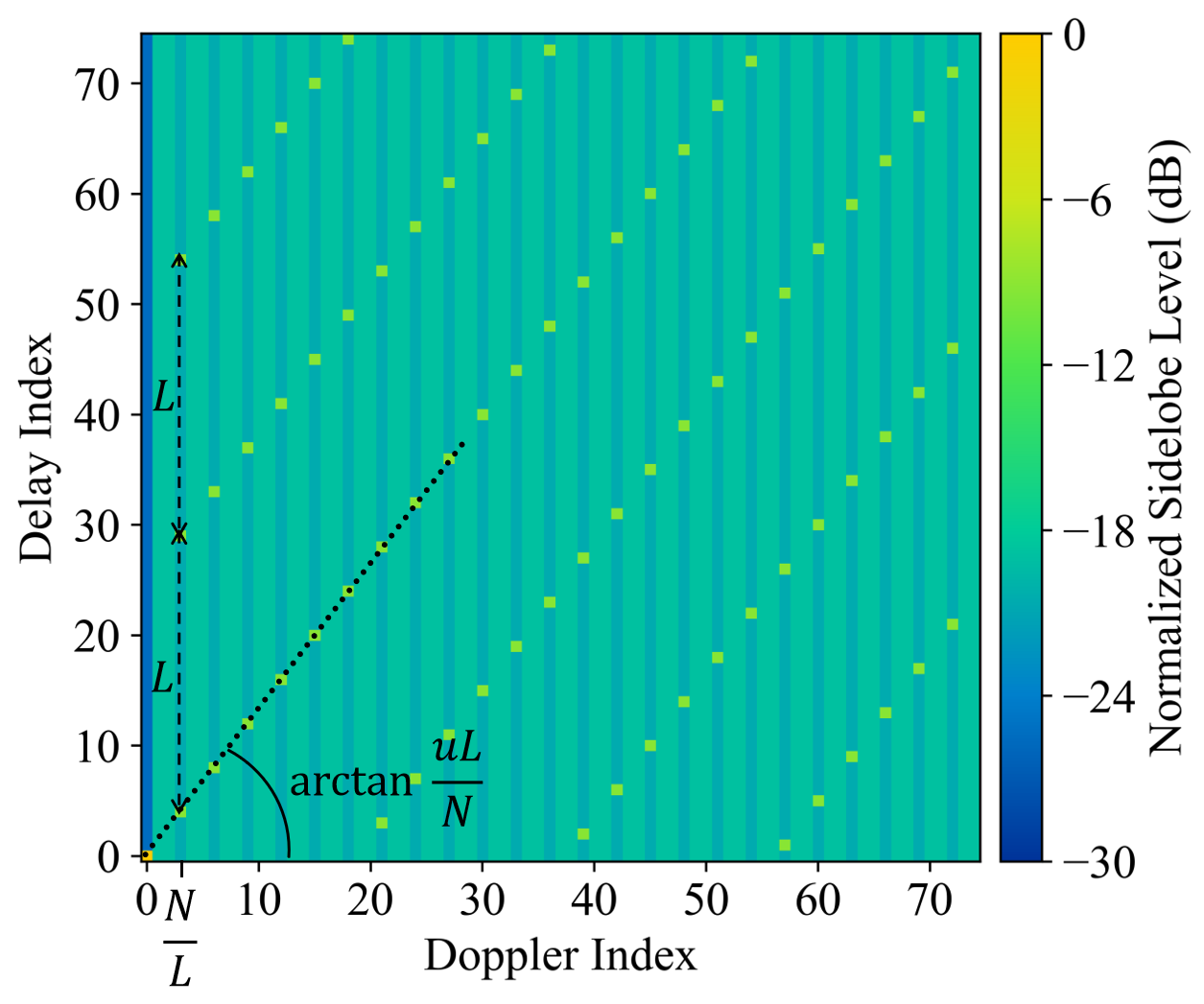}
    \caption{2D view of (a).}
    \label{fig:comb_theo_dpaf_3d_and_2d}
    \end{subfigure}
    \centering
    \begin{subfigure}[t]{.49\linewidth}
    \centering
    \includegraphics[width=1\linewidth]{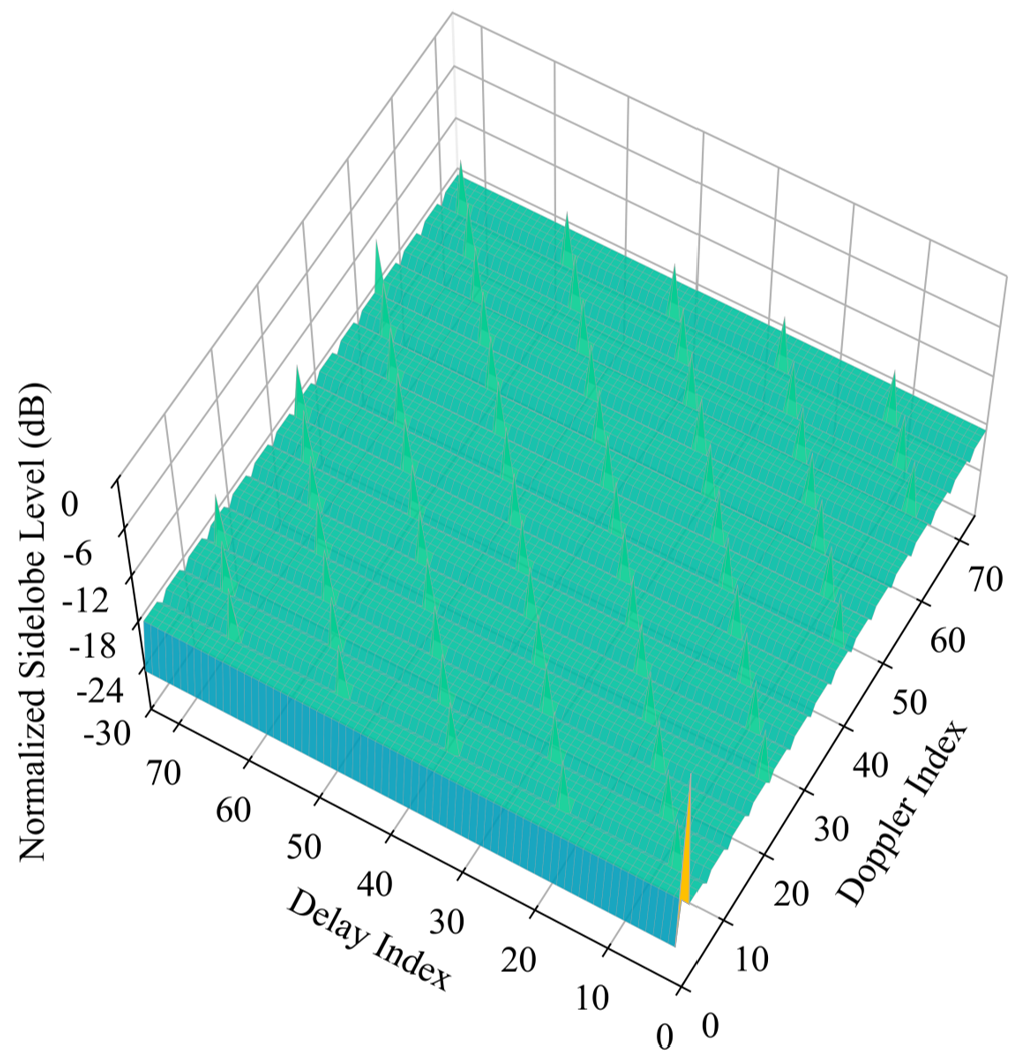}
    \caption{The case when $L=15$.}
    \end{subfigure}
    \centering
    \begin{subfigure}[t]{.49\linewidth}
    \centering
    \includegraphics[width=1\linewidth]{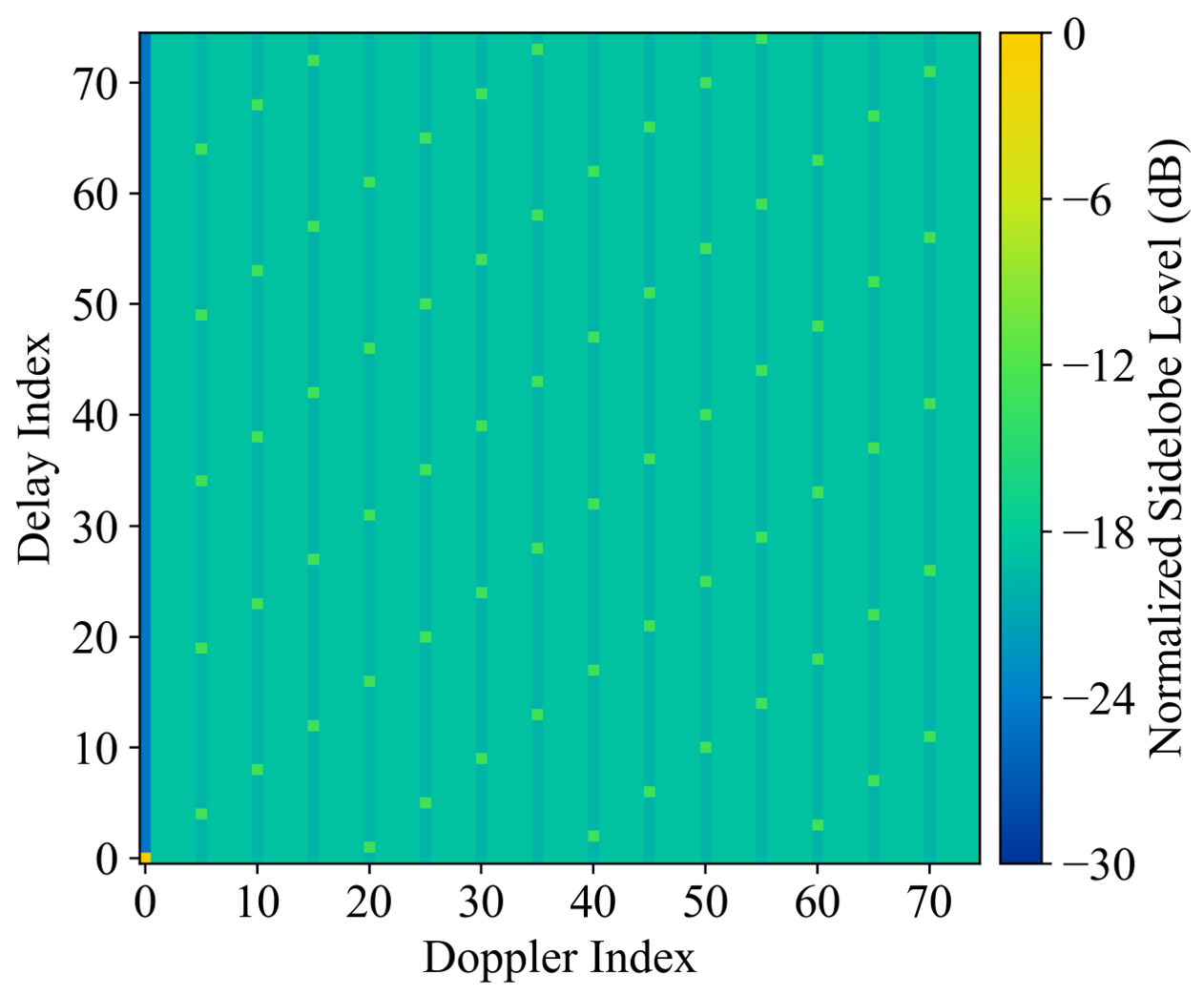}
    \caption{2D view of (c).}
    \end{subfigure}
    \caption{Theoretical expected squared DP-AF of the OFDM signal with equally spaced ZC sequences, where $N=75$, $u=4$, under 16-QAM constellation.}
    \label{fig:all_comb_theo_dpaf_3d_and_2d}
\end{figure}


In this section, we analyze the expected squared DP-AF of OFDM signals using odd-length ZC sequences as pilots under two pilot patterns: equally spaced placement and contiguous placement. \textcolor{myLightBlue}{The choice of these two patterns reflects their counterparts in practical OFDM systems. Equally spaced placement corresponds to the comb-type pilot pattern widely used for channel estimation, where uniformly distributed pilots, such as the
demodulation reference signals (DMRS) \cite{11322556},
allow the receiver to interpolate the frequency-domain channel response. 
Contiguous placement corresponds to the localized pilot arrangement used for synchronization and initial acquisition, where a cluster of adjacent known symbols, such as the secondary synchronization signals (SSS) \cite{10502156}, serves as the timing or acquisition reference. 
Analyzing both patterns thus covers the main roles of pilots.} 



\begin{figure}[t]
    \centering
    \includegraphics[width=0.9\linewidth]{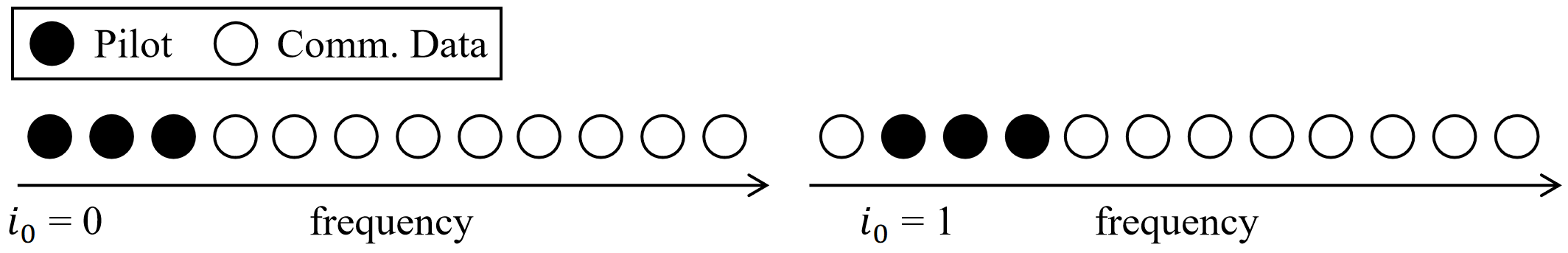}
    \caption{The pilot pattern corresponding to contiguous placement, where $N=12$, $L=3$.}
    \label{fig_cont_pattern}
\end{figure}

\subsection{Equally Spaced Placement as the Pilot Pattern}

In this subsection, the subcarrier indices for pilots are separated by a fixed \textcolor{myLightPurple}{interval.} 
We consider the case where $N$ is a multiple of $L$, and set the spacing to \textcolor{myLightPurple}{$N/L$} \cite{chen2006peak}. The pilot indices are given by $i_l= l(N/L)+i_0,\ l=0,...,L-1$. 
\textcolor{myLightPurple}{An example} illustrating the pilot pattern is shown in Fig. \ref{fig_equi_pattern}.

The corresponding expected squared DP-AF is given in \eqref{chirp_dp_af}. On $(N-L)$ Doppler bins, the expected sidelobe levels exhibit a flat baseline equal to $N$, indicating a noise-like ambiguity response contributed entirely by the random data component. In contrast, on the Doppler slices indexed by $m(N/L)$, the AF develops a distinct comb-shaped structure along the delay dimension. The delay-domain profile alternates between the upper bound and the lower bound according to the congruence relation $L\mid(k-um)$ or \textcolor{myLightRed}{$k\equiv um(\mathrm{mod~}L)$}. As shown in Fig. \ref{fig:comb_theo_dpaf_3d_and_2d}, the peaks repeat with period $L$ over delay and shift linearly with the Doppler parameter $m$, tracing a family of parallel lines with slope $uL/N$ in the discrete delay-Doppler plane.

The appearance of peaks exclusively at $q=m(N/L)$ can be explained by the equal spacing of the ZC sequence in the frequency domain. A Doppler shift of $m(N/L)$ exactly aligns the embedded ZC sequence with its shifted replica. The linear congruence condition $k\equiv um(\mathrm{mod~}L)$ is the discrete counterpart of the range-Doppler coupling typical of linear chirps \cite{berggren2022joint}, where the delay of the correlation peak is proportional to the Doppler shift with slope determined by the chirp rate (here controlled by the root index $u$). Where this condition fails, the perfect autocorrelation property of the ZC sequence translates into a suppression below the data floor, creating a notched delay profile. By designing $u$, $L$, and $N/L$, one can precisely control the positions and the prominence of these sensing peaks.

Furthermore, note that on the zero-delay cut, the condition $k\equiv um(\mathrm{mod~}L)$ cannot be satisfied under the conventional assumption that $u$ and $L$ are co-prime \cite{pitaval2020overcoming}. Therefore
\begin{equation}\begin{aligned}\label{zc_dp_af_zero_delay}\ &\ \mathbb{E}\left(|\mathcal{X}^{\mathrm{DP}}(0,q)|^2\right)\\&=\begin{cases} N^2+(\kappa-1)(N-L), & \text{if}\ q=0,\\
N-L,\   & \text{if}\ q=m\frac{N}{L},\ m>0,\\
N,  &\text{else}.\end{cases}\end{aligned}\end{equation}

\subsection{Contiguous Placement as the Pilot Pattern}
In this subsection, the pilots are placed contiguously over a block of adjacent \textcolor{myLightPurple}{subcarriers} \cite{syrjala2019pilot}. The occupied indices are set as $i_l= i_0+l,\ l=0,...,L-1$. 
\textcolor{myLightPurple}{An example} of the pattern is shown in Fig. \ref{fig_cont_pattern}. The corresponding expected squared DP-AF is expressed as \textcolor{myLightRed}{follows}.


\begin{proposition}\label{proposition_4}
\textcolor{myLightRed}{When  $L\leq N/2$, the expected squared DP-}AF of the OFDM signals containing contiguously placed odd-length ZC sequences is \rm
\begin{align}\label{block_dp_af_l_xiao}
&\ \nonumber\mathbb{E}\left(|\mathcal{X}^{\mathrm{DP}}(k,q)|^2\right)\\&=\begin{cases}N^2\delta_{k,0}+(\kappa-1)(N-L), & \text{if}\ q=0,\\
	g^2_1(k,q)-L+q+N,    & \text{if}\  1\leq q< L,\\
    N,& \text{if}\  L\leq q\leq N-L,\\
    g^2_2(k,q)-L-q+2N,  &\text{if}\  N-L< q<N.
	\end{cases}	
\end{align} \it
When  $L> N/2$, the expected squared DP-AF is \rm
\begin{align}\label{block_dp_af_l_da} 
&\ \nonumber\mathbb{E}\left(|\mathcal{X}^{\mathrm{DP}}(k,q)|^2\right)\\&=\begin{cases}N^2\delta_{k,0}+(\kappa-1)(N-L), & \text{if}\ q=0,\\
	g^2_1(k,q)-L+q+N,    & \text{if}\  1\leq q\leq N-L,\\
    \multicolumn{2}{l}{\hspace*{-0.6em}\left[g_1(k,q)+(-1)^{k-uN}g_2(k,q)\right]^2+2N-2L,}\\&\text{if}\  N-L< q < L,\\
    g^2_2(k,q)-L-q+2N,  &\text{if}\  L\leq q<N,
	\end{cases}	
\end{align} \it
where $g_1(k,q)$ and $g_2(k,q)$ take the form of Dirichlet kernels \cite{dehkordi2023hierarchical} (a discrete-time analog of the sinc function):\rm
\begin{align}
    g_1(k,q):=\frac{\text{sin}\left[\pi\left(\frac{k}{N}-\frac{uq}{L}\right)(L-q)\right]}{\text{sin}\pi\left(\frac{k}{N}-\frac{uq}{L}\right)},
\end{align}
\begin{align}
    g_2(k,q):=\frac{\text{sin}\left[\pi\left(\frac{k}{N}-\frac{uq}{L}+\frac{uN}{L}\right)(L-N+q)\right]}{\text{sin}\pi\left(\frac{k}{N}-\frac{uq}{L}+\frac{uN}{L}\right)}.
\end{align} \it
If $k =\langle \frac{uqN}{L}\rangle_N$ or $k = \langle \frac{u(q-N)N}{L}\rangle_N$, the expected squared DP-AF in \eqref{block_dp_af_l_xiao} and \eqref{block_dp_af_l_da} approaches the corresponding limit.
\end{proposition}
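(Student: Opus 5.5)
By Theorem \ref{theorem_1}, the case $q=0$ needs no further work, and for $q\neq 0$ it is enough to compute $\psi(k,q)=g(k,q)+N-h(q)$ for the contiguous pattern. The cyclic-shift corollary shows the expected squared DP-AF does not depend on $i_0$, so I take $i_0=0$. Then $I=\{0,\dots,L-1\}$ and $p_l=z_l=e^{-j\frac{\pi u}{L}l(l+1)}$.

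\textbf{Step 1: the mask autocorrelation $h(q)$.} For $q\neq 0$, $h(q)$ counts the $n\in\{0,\dots,L-1\}$ with $\langle n-q\rangle_N\in\{0,\dots,L-1\}$. Two disjoint families of indices can contribute. The non-wrapping ones are $n=q,\dots,L-1$, giving $L-q$ terms when $q<L$. The wrapping ones satisfy $n-q+N\le L-1$, i.e. $n=0,\dots,L-1-N+q$, giving $L-N+q$ terms when $q>N-L$.

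If $L\le N/2$, at most one family is nonempty for each $q$. This gives $h(q)=L-q$ on $1\le q<L$, $h(q)=0$ on $L\le q\le N-L$, and $h(q)=L-N+q$ on $N-L<q<N$. If $L>N/2$, both families are present on $N-L<q<L$, so there $h(q)=2L-N$. Substituting into $N-h(q)$ produces every additive constant in \eqref{block_dp_af_l_xiao} and \eqref{block_dp_af_l_da}. In the middle range $L\le q\le N-L$ no pilot pairs overlap, so $g=0$ and the ESL is $N$.

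\textbf{Step 2: the pilot term $g(k,q)$.} Write $\theta=\frac{k}{N}-\frac{uq}{L}$. On the non-wrapping family,
\begin{align*}
p_np_{n-q}^*=e^{-j\frac{\pi u}{L}\left[n(n+1)-(n-q)(n-q+1)\right]}=e^{-j\frac{\pi u}{L}q(1-q)}\,e^{-j\frac{2\pi uq}{L}n}.
\end{align*}
The quadratic part cancels and the phase is linear in $n$. The sum $\sum_{n=q}^{L-1}e^{j2\pi n\theta}$ is geometric. Its modulus is $|\sin(\pi\theta(L-q))/\sin(\pi\theta)|=|g_1(k,q)|$. Hence $g=g_1^2$, which covers the case $1\le q<L$ (and the case $1\le q\le N-L$ when $L>N/2$).

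On the wrapping family the conjugated index is $n-q+N$. Repeating the computation with $q$ replaced by $q-N$ gives a linear phase at frequency $\theta+\frac{uN}{L}$ and $L-N+q$ terms. This yields $g=g_2^2$.

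When the denominator vanishes, i.e. $\theta\in\mathbb{Z}$ or $\theta+\frac{uN}{L}\in\mathbb{Z}$, the geometric sum equals its number of terms. This matches the continuous limit of the Dirichlet kernel and accounts for the final sentence of the proposition about $k=\langle uqN/L\rangle_N$ and $k=\langle u(q-N)N/L\rangle_N$.

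\textbf{Step 3 (main obstacle): the overlap region $N-L<q<L$ when $L>N/2$.} Here $g$ is the squared modulus of the sum of the two geometric sums, so their relative phase is needed, not just their moduli. I would write each sum in centred form, as (constant phase) $\times$ $e^{j\pi\theta'(\text{first}+\text{last index})}$ $\times$ (real Dirichlet kernel). Then I would collect all the constant phases, including the $e^{-j\frac{\pi u}{L}q(1-q)}$-type factors and their wrapping counterparts. The goal is to show that, after removing a common unimodular factor, the ratio of the two prefactors is exactly $(-1)^{k-uN}$. The facts I expect to use are that $L$ is odd, $u,k,N$ are integers, and $e^{j2\pi uN\cdot(\text{integer})/L}$-type terms combine with the midpoint phases into $e^{j\pi(k-uN)}$. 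This gives $g=[g_1+(-1)^{k-uN}g_2]^2$.

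This phase bookkeeping is where sign errors are most likely. I would check it numerically for small parameters, for example $N=5$, $L=3$, $u=1$, before writing it out.
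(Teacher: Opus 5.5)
Your proposal is correct and follows essentially the same route as the paper's proof. Both count overlaps to get $h(q)$, cancel the quadratic ZC phase so that each family becomes a geometric (Dirichlet-kernel) sum, and track the relative phase in the overlap region; the only cosmetic difference is that you compute the $1\le q<L$ branch directly instead of invoking central symmetry. The Step~3 bookkeeping does close: the two centred sums differ in phase by exactly $\pi(uN-k)$, every $1/L$ term cancels, and so the oddness of $L$ is not actually used. Note, however, that your suggested test case $N=5$, $L=3$ has an empty overlap range $N-L<q<L$, so it would not exercise Step~3; use e.g.\ $N=7$, $L=5$, $q\in\{3,4\}$ instead.
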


\begin{proof}
    See Appendix \ref{proof_of_proposition_4}.
\end{proof}

\begin{figure}[t]
    \centering
    \begin{subfigure}[t]{.49\linewidth}
    \centering
    \includegraphics[width=1\linewidth]{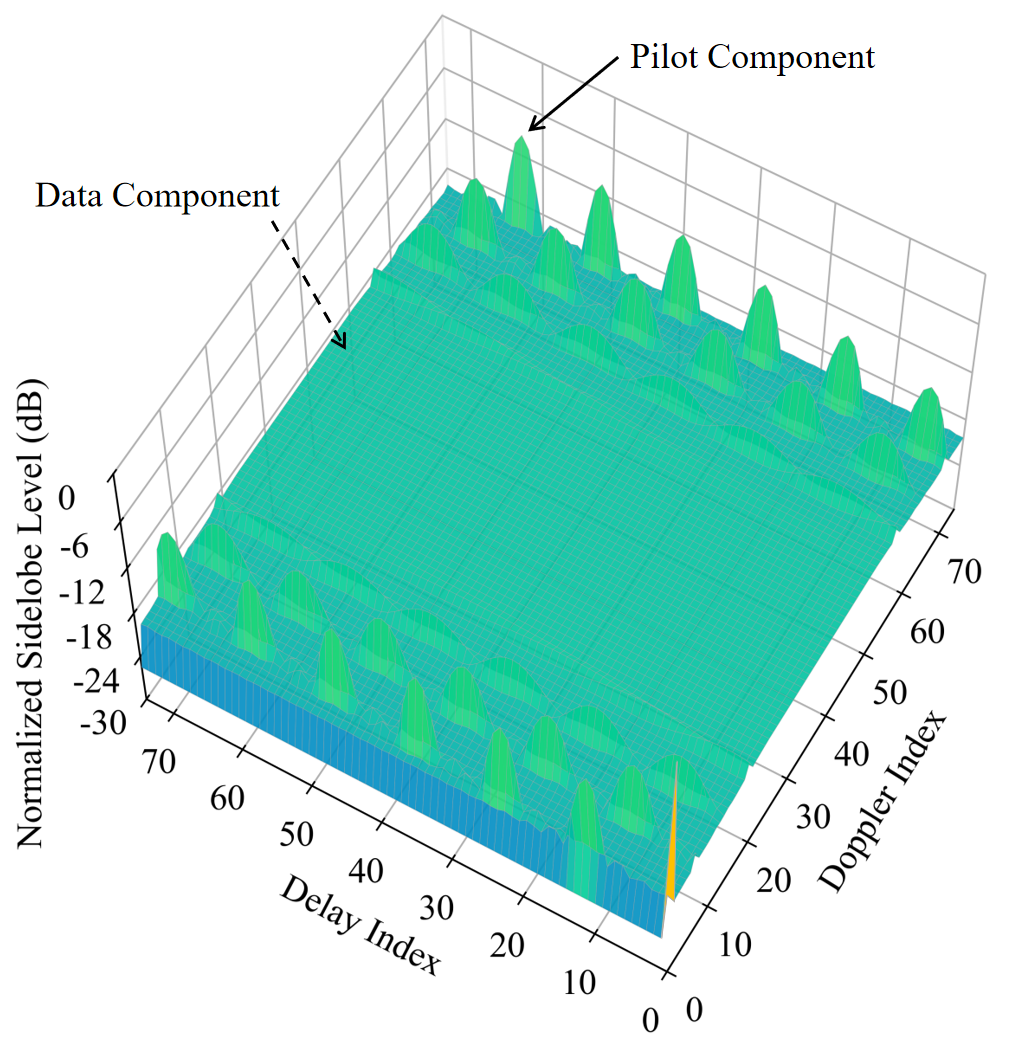}
    \caption{The case when $L=25$.}
    \label{2_1_block_l_xiao_theo_dpaf_3d_N_75_L_25_u_4}
    \end{subfigure}
    \centering
    \begin{subfigure}[t]{.49\linewidth}
    \centering
    \includegraphics[width=1\linewidth]{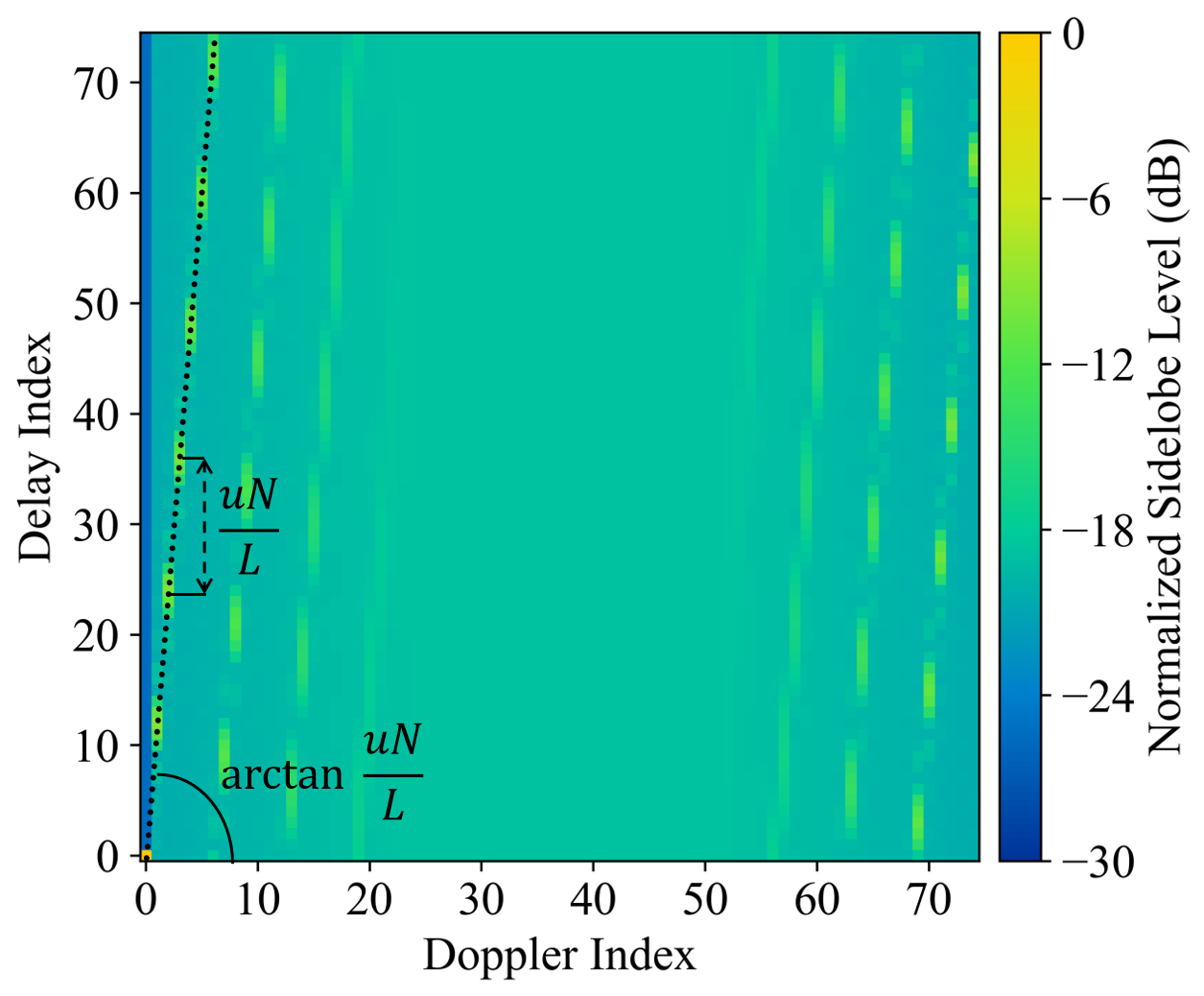}
    \caption{2D view of (a).}
    \label{fig:block_l_xiao_theo_dpaf_3d_and_2d}
    \end{subfigure}
    \centering
    \begin{subfigure}[t]{.49\linewidth}
    \centering
    \includegraphics[width=1\linewidth]{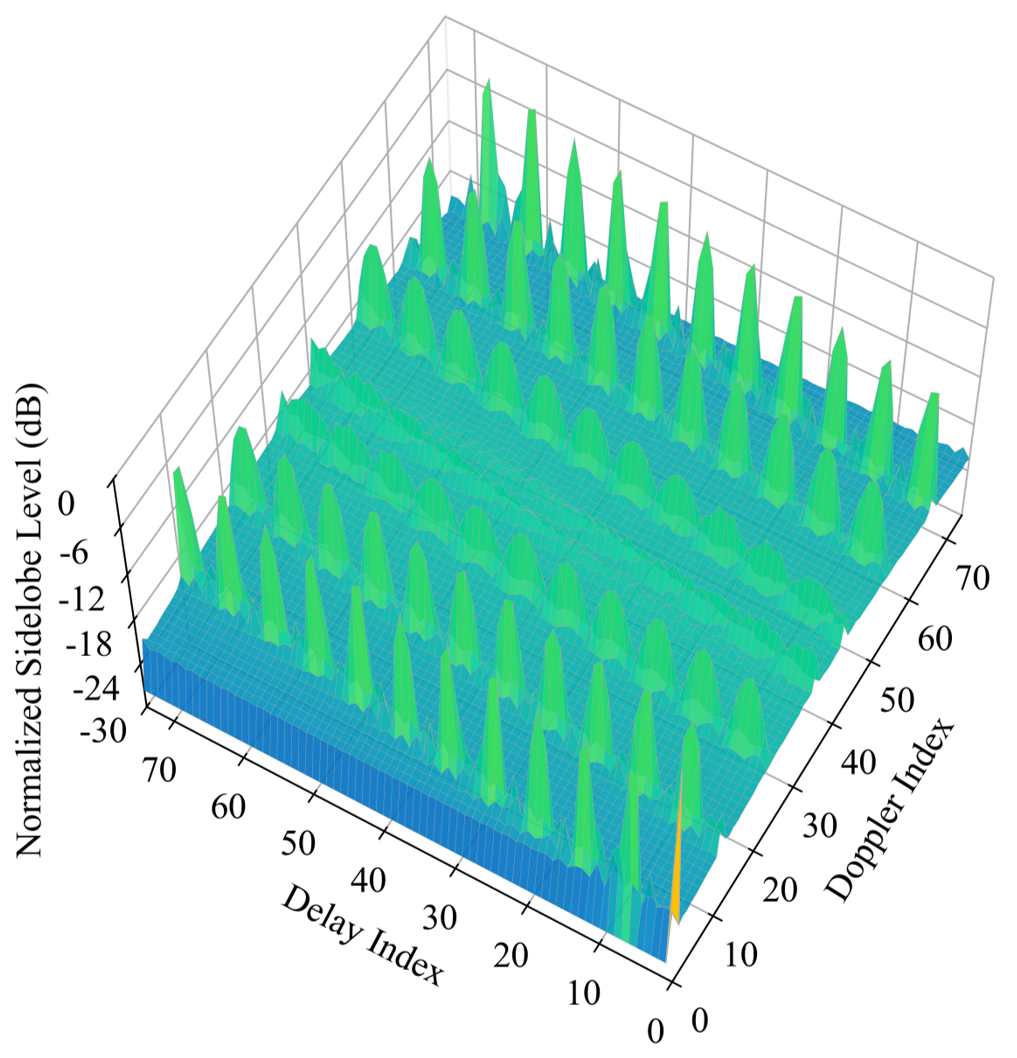}
    \caption{The case when $L=45$.}
    \end{subfigure}
    \centering
    \begin{subfigure}[t]{.49\linewidth}
    \centering
    \includegraphics[width=1\linewidth]{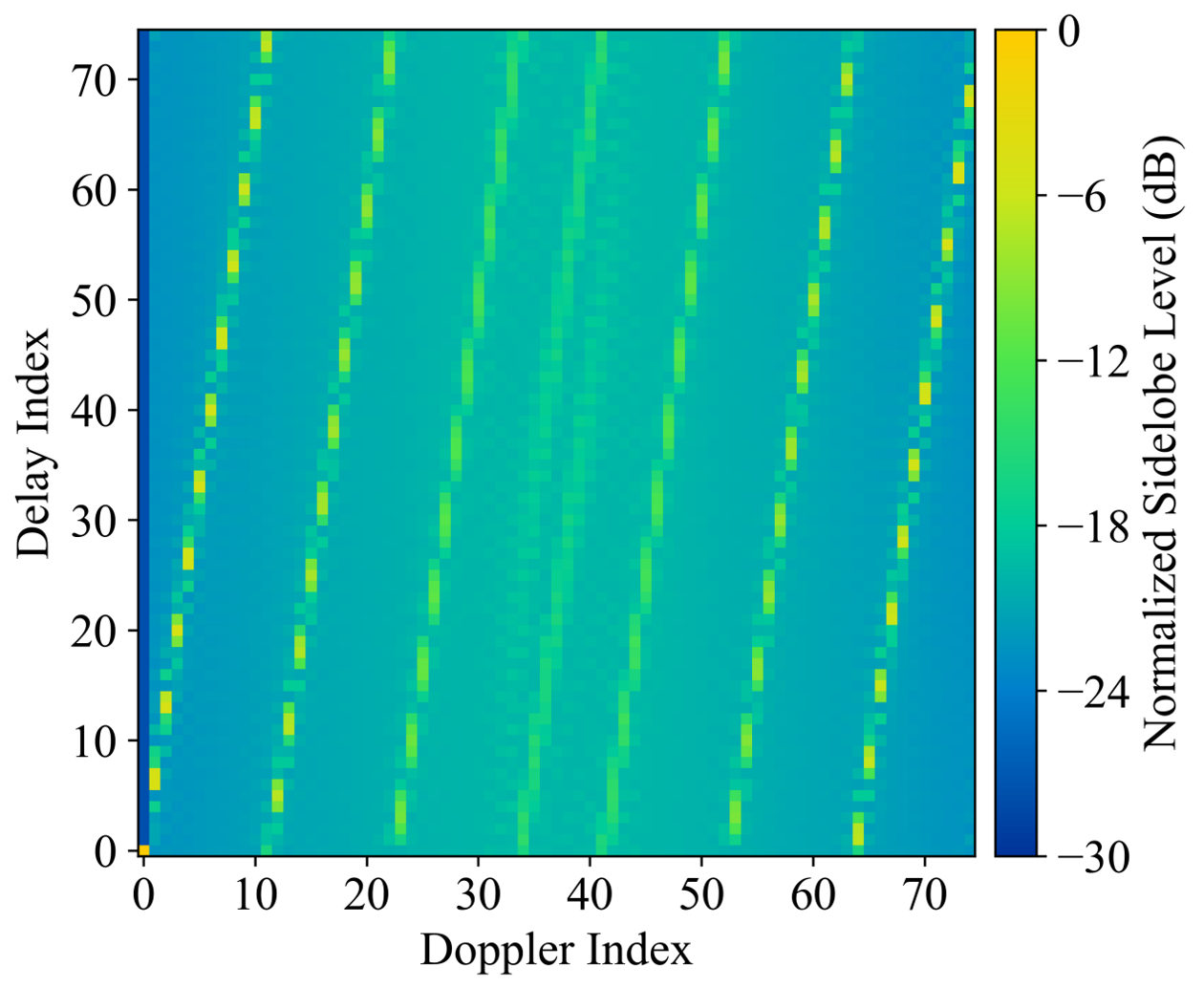}
    \caption{2D view of (c).}
    \label{fig:block_l_da_theo_dpaf_3d_and_2d}
    \end{subfigure}
    \caption{Theoretical expected squared DP-AF of the OFDM signal with contiguously placed ZC sequences, where $N=75$, $u=4$, under 16-QAM constellation.}
    \label{fig:all_block_theo_dpaf_3d_and_2d}
\end{figure}

\begin{figure*}[t]
    \centering
    \begin{subfigure}{.32\linewidth}
		\includegraphics[width=1\linewidth]{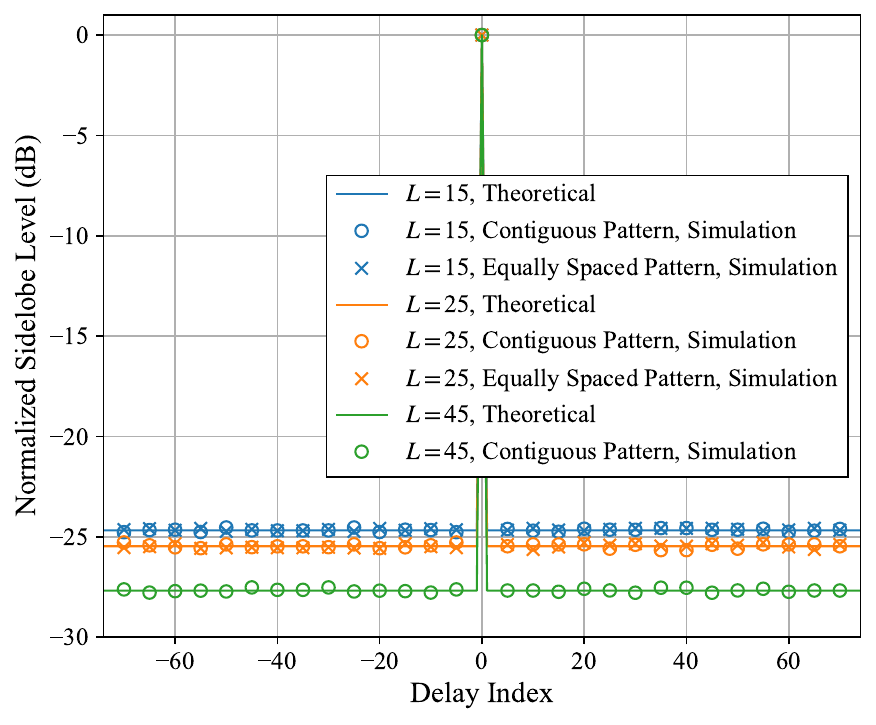}
		\caption{Zero-Doppler cuts.}
		\label{zero_doppler_slice}
	\end{subfigure}
	\begin{subfigure}{.32\linewidth}
		\includegraphics[width=1\linewidth]{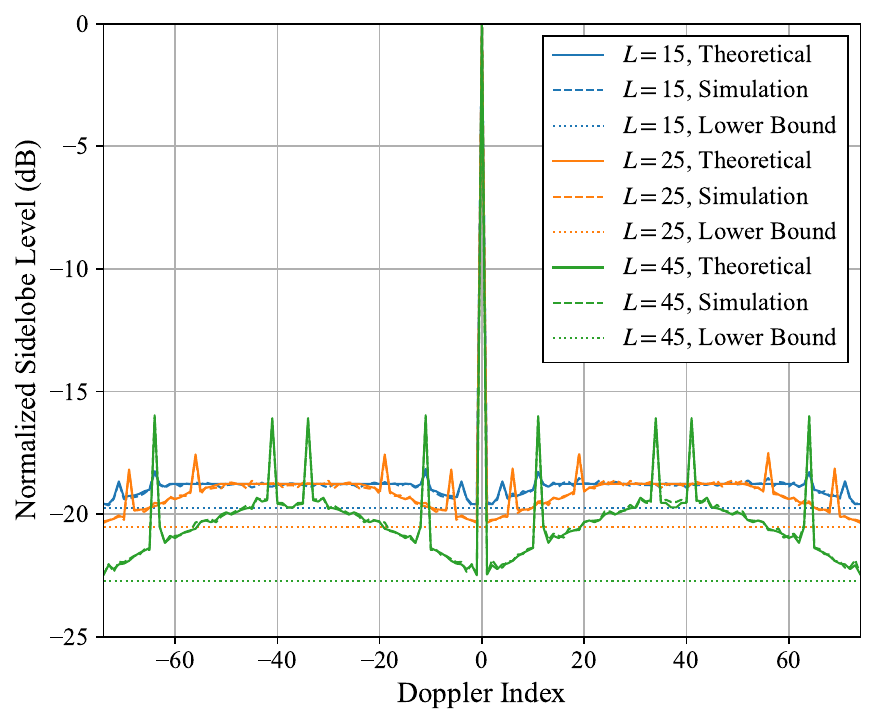}
		\caption{Zero-delay cuts (contiguous pattern).}
		\label{zero_delay_slice_cont_pattern}
	\end{subfigure}
    \begin{subfigure}{.32\linewidth}
		\includegraphics[width=1\linewidth]{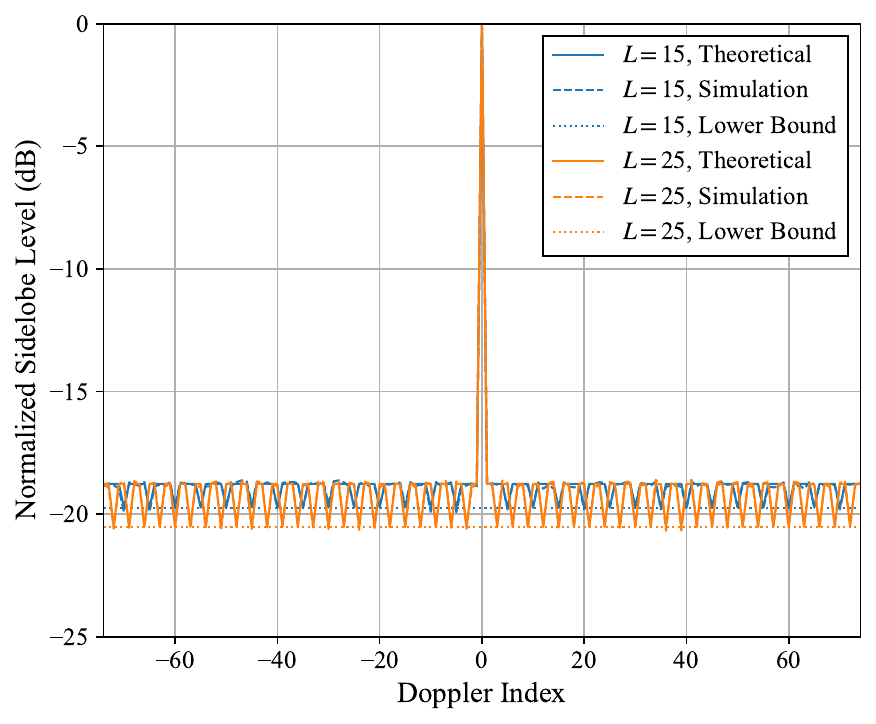}
		\caption{Zero-delay cuts (equispaced pattern).}
		\label{zero_delay_slice_equi_pattern}
	\end{subfigure}
	\caption{Zero-Doppler and zero-delay cuts of the theoretical and simulated expected squared DP-AF of the OFDM signal, with contiguously placed and equally spaced ZC sequences, where $N=75$, $u=4$, under 16-QAM constellation.} 
    \label{delay_and_doppler_slice}

    \hrulefill
\end{figure*}

When  $L\leq N/2$, the \textcolor{myLightRed}{non-zero} Doppler slices reveal two ridge-like regions, separated by a flat data-floor region. Specifically, for $1\leq q<L$, as shown in Fig. \ref{fig:block_l_xiao_theo_dpaf_3d_and_2d}, the DP-AF forms a ridge along the delay index $k$ that peaks near $k =\langle \frac{uqN}{L}\rangle_N$, tracing a linear delay-Doppler coupling with slope $uN/L$. \textcolor{myLightBlue}{The height} of this ridge, and hence its sharpness, is governed by the factor $(L-q)$ in $g_1$, reflecting the number of overlapping ZC subcarriers between the transmitted pilot and the Doppler-shifted received pilot. For $N-L< q<N$, an analogous ridge emerges from $g_2(k,q)$ due to the central symmetry stated in \eqref{central_symmetry}, now centered at $k = \langle \frac{u(q-N)N}{L}\rangle_N$, whose height is proportional to $(L-N+q)$. In the intermediate Doppler region $L\leq q\leq N-L$, the ZC pilot and its shifted replica share no overlapping subcarriers, \textcolor{myLightPurple}{thus} the AF \textcolor{myLightPurple}{reduces 
to} the constant data floor $N$, indicating the absence of pilot contribution.

When the pilot length exceeds half the bandwidth, i.e., $L>N/2$, as shown in Fig. \ref{fig:block_l_da_theo_dpaf_3d_and_2d}, the two ridge regions overlap in the interval $N-L<q<L$, giving rise to a coherent interference pattern between $g_1$ and $g_2$ that can either enhance or partially cancel the ridges depending on the parity of $(k-uN)$. 

\newpage

In the equally spaced case, the AF exhibits a periodic comb of sharp, isolated pulses. Here, the contiguous placement \textcolor{myLightRed}{reduces} the periodic structure into a continuous ridge that sweeps through the delay-Doppler plane along the coupling line. The design choice between the two patterns thus trades off 
between \textcolor{myLightRed}{concentrated sidelobe peaks and extended sidelobe ridges,}
with the baseline data-floor $N$ remaining identical in non-overlapping regions for both.

\textcolor{myLightRed}{\it Remark 1: \rm
The two structured pilot patterns analyzed above are designed primarily for communication tasks rather than for shaping the ambiguity response. As shown in Figs. \ref{fig:all_comb_theo_dpaf_3d_and_2d} and \ref{fig:all_block_theo_dpaf_3d_and_2d}, both patterns produce prominent high sidelobes in the expected squared DP-AF, arising from the coherent superposition of the pilot component with its Doppler-shifted replica. These high sidelobes cause strong sensing interference, so pilot patterns should be redesigned for ISAC. Owing to space limitations, we adopt a randomly placed pilot pattern as a simple low-complexity heuristic, and defer the systematic optimization of the pilot pattern for ISAC to future work. By spreading the pilot contributions irregularly across the delay-Doppler plane, the random pattern avoids the accumulation of sidelobe energy; its effectiveness in reducing the high sidelobes and the resulting sensing gain are verified in next section.}


\section{Simulation Results}\label{section_5}

This section provides numerical results that validate the theoretical findings in this work. All simulation outcomes are computed as averages over $3000$ independent runs. \newpage

\begin{figure}[t]
    \centering
    \includegraphics[width=0.8\linewidth]{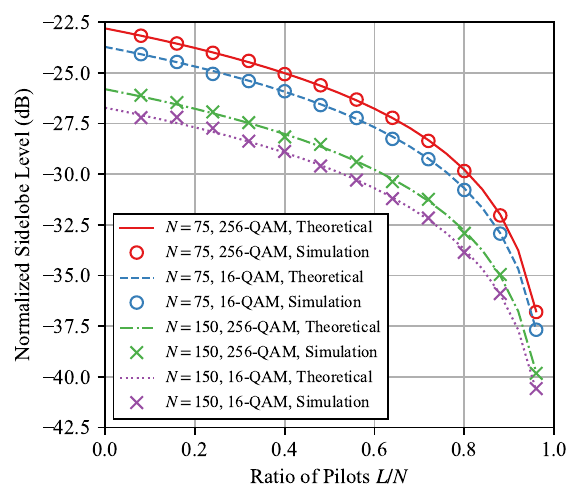}
    \caption{The normalized ESL on the zero-Doppler cuts of the expected squared DP-AF for the OFDM signal, with $N=75/150$ and varying $L$, under 16-QAM/256-QAM constellation.}
    \label{fig:DPAF_ACF_vs_pilot_num}
\end{figure}

Fig. \ref{delay_and_doppler_slice} compares the theoretical and simulated cuts of the expected squared DP-AF for the OFDM signal 
when $N=75$. The ZC sequences under two 
distinct 
pilot patterns are incorporated. The zero-Doppler cuts are first examined in Fig. \ref{zero_doppler_slice}. Since the zero-Doppler sidelobe behavior is independent of the pilot pattern, the curves for the contiguous and equally spaced configurations coincide. Increasing $L$ from $15$ to $45$ yields a 3 dB improvement in the peak-to-sidelobe level ratio. 
\textcolor{myLightPurple}{The agreement with simulations validates the analytical derivations.}

For the contiguous pattern, Fig. \ref{zero_delay_slice_cont_pattern} displays the zero-delay cuts for $L=15$, $25$ and $45$. All three curves exhibit dominant local peaks on each side of the mainlobe. This peak structure stems from the terms $g_1^2(0,q)$ and $g_2^2(0,q)$ and appears at Doppler indices where $uq/L$ or $u(N-q)/L$ approaches an integer. As $L$ grows, the peaks become higher while the valleys sink deeper. When $L$ is increased from $15$ to $45$, the minimum sidelobe level at zero-delay cut decreases by $2.86$ dB, at the cost of a $2.24$ dB increase in the maximum sidelobe level. The simulated curves closely match their theoretical counterparts. As predicted in \eqref{block_dp_af_l_xiao} and \eqref{block_dp_af_l_da}, in the cases $L=15$ and $25$ (where $L\leq N/2$), a flat data floor can be observed over the central interval, which vanishes for $L=45>N/2$.

For the equally spaced pattern in Fig. \ref{zero_delay_slice_equi_pattern}, the zero-delay cuts are displayed for $L=15$ and $L=25$, both satisfying $L\mid N$. As predicted by \eqref{zc_dp_af_zero_delay}, the profile exhibits a periodic comb of sharp, isolated dips precisely at Doppler indices $q = m(N/L)$. Each dip reaches the lower bound $(N-L)$, separated by intervals where the response stabilizes at the constant data floor $N$. This behavior is a direct consequence of the frequency-domain periodicity induced by the uniform pilot spacing. By tuning $L$, one can flexibly control both the density and the depth of the sensing notches along the Doppler axis while preserving an otherwise constant baseline.

\begin{figure}[t]
    \centering
    \begin{subfigure}{.49\linewidth}
		\includegraphics[width=1\linewidth]{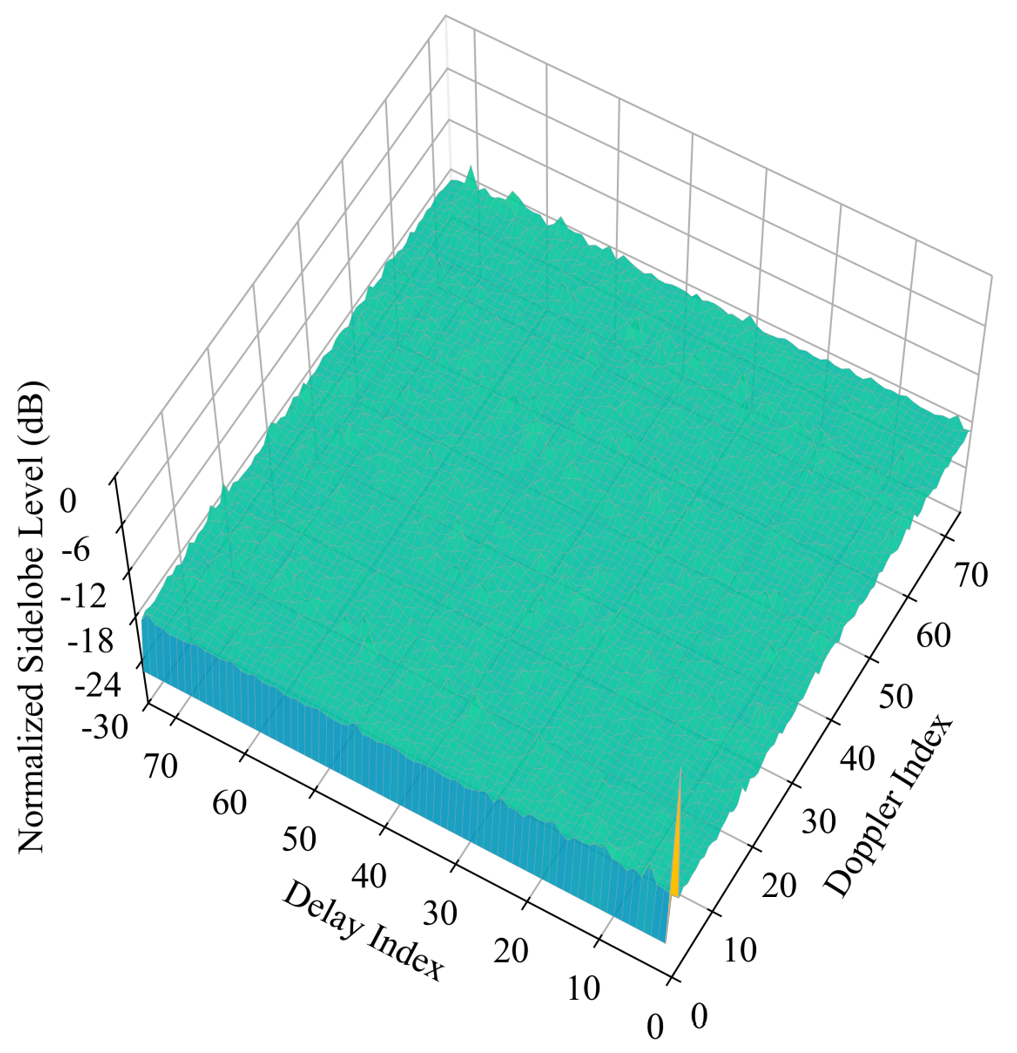}
		\caption{The case when $L=25$.}
	\end{subfigure}
    \centering
	\begin{subfigure}{.49\linewidth}
		\includegraphics[width=1\linewidth]{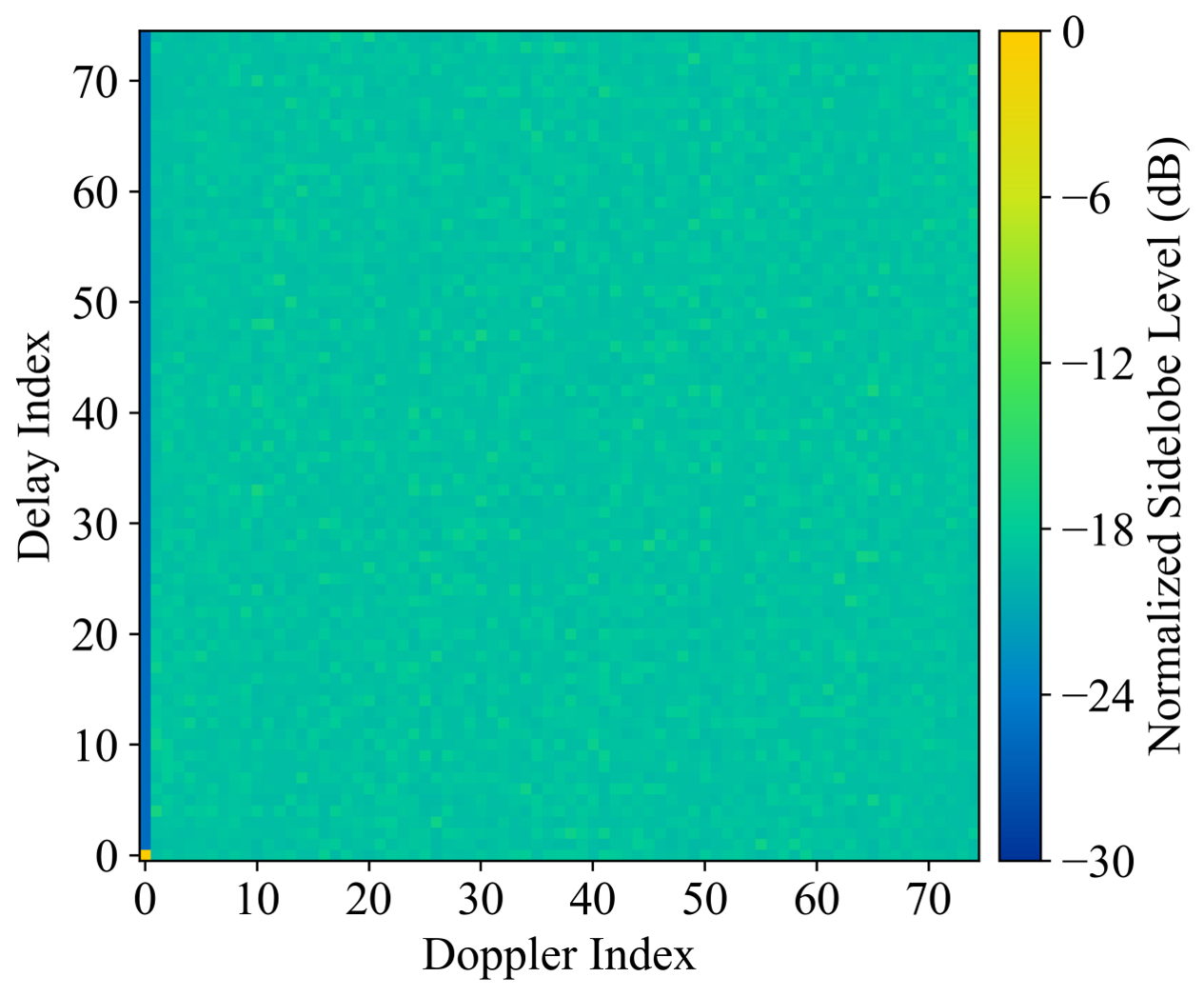}
		\caption{2D view of (a).}
	\end{subfigure}
	\caption{Theoretical expected squared DP-AF of the OFDM signal with a randomly placed ZC sequence, where $N=75$, $u=4$, under 16-QAM constellation.} 
    \label{fig:dpaf_random_pattern}
\end{figure}

\begin{figure}[t]
    \centering
    \begin{subfigure}{.99\linewidth}
		\includegraphics[width=1\linewidth]{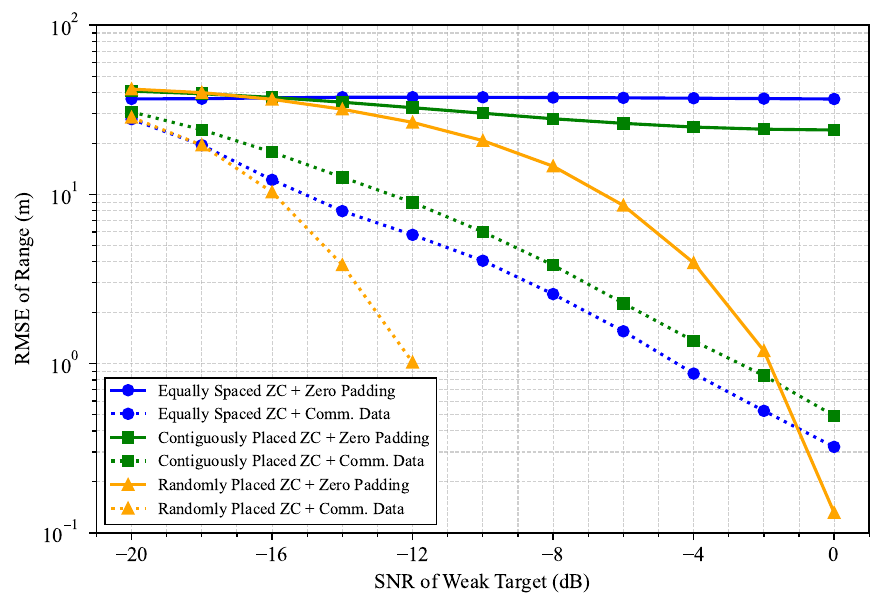}
		\caption{\textcolor{myLightRed}{RMSE of range.}}
	\end{subfigure}
    \centering
	\begin{subfigure}{.99\linewidth}
		\includegraphics[width=1\linewidth]{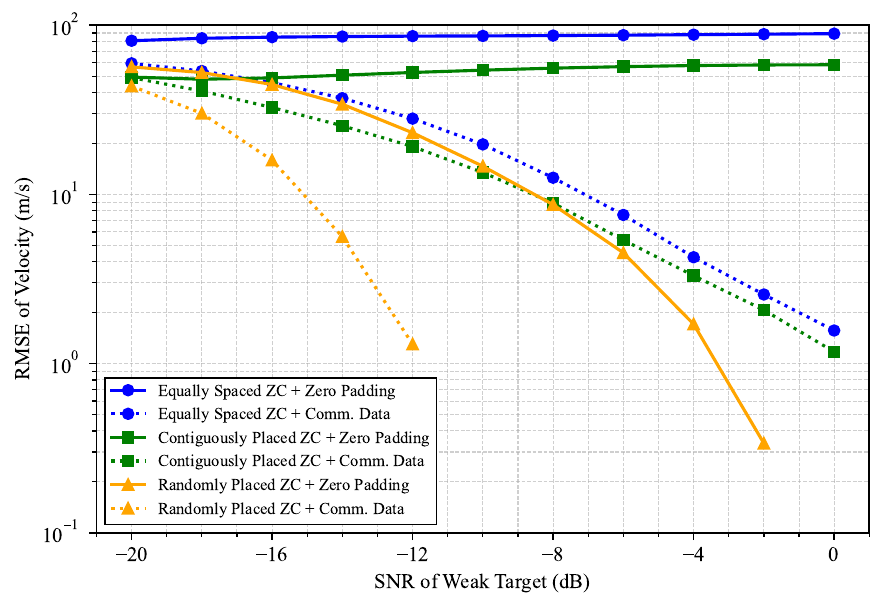}
		\caption{RMSE of velocity.}
	\end{subfigure}
	\caption{\textcolor{myLightBlue}{The RMSEs of the estimated range and velocity} for the OFDM signal with equally spaced, contiguously placed and randomly placed ZC sequences, where $N=417$, $L=139$, $u=1$, under 16-QAM constellation.} 
    \label{fig:RMSE}
\end{figure}

Next, we investigate the normalized ranging sidelobe levels on the zero-Doppler cuts of the expected squared DP-AF. In Fig. \ref{fig:DPAF_ACF_vs_pilot_num}, it can be seen that as the pilot ratio $L/N$ increases, the normalized ranging sidelobe levels decrease monotonically. By comparing the curves for $N=75$ and $N=150$ at the same $L/N$ ratio, it can be found that doubling $N$ provides approximately a $3$ dB reduction in the normalized ranging sidelobe level. Furthermore, 256-QAM exhibits nearly $1$ dB higher sidelobes than 16-QAM because of its larger kurtosis.

Fig. \ref{fig:dpaf_random_pattern} illustrates the expected squared DP-AF for a randomly drawn pilot pattern. In contrast to the DP-AF results in Figs. \ref{1_1_comb_theo_dpaf_3d_N_75_L_25_u_4} and \ref{2_1_block_l_xiao_theo_dpaf_3d_N_75_L_25_u_4} corresponding to the equally spaced and contiguous pilot patterns, the random placement is observed to reduce the high sidelobes. By spreading the pilot contributions irregularly across the delay-Doppler plane, the random placement avoids the accumulation of sidelobe energy, 
\textcolor{myLightPurple}{therefore} resulting in a flatter ambiguity profile.

For the 1D OFDM signal with its DP-AF already 
characterized, the estimation performance of range and velocity versus SNR is presented in Fig. \ref{fig:RMSE}. 
\textcolor{myLightRed}{The subcarrier spacing is $15$ kHz, and the carrier frequency is $77$ GHz.} 
The sensing region of interest corresponds to a range interval of $[0,119.9]$ m and a velocity interval of $[0,146.1]$ m/s. A strong target with amplitude $1$ is located at $(23.98\text{m},58.44\text{m/s})$, while a weak target with amplitude $0.5$ is located at $(71.94\text{m},29.22\text{m/s})$. SNR is defined as the amplitude of the weak target divided by the noise power. All root mean squared error (RMSE) results are evaluated with respect to the \textcolor{myLightPurple}{weak target.} 
The length of the ZC sequence is $139$, which is a typical setting of the short-format preamble of \textcolor{myLightPurple}{PRACH} in \textcolor{myLightPurple}{5G NR} \cite{tosun2026coordinated}. Under the same pilot pattern, the joint use of ZC sequences and random data for sensing yields a substantial reduction in RMSE compared to the case where only ZC sequences are employed, indicating that the utilization of random data can effectively improve sensing performance. Furthermore, the choice of the pilot pattern matters: the randomly placed pilot pattern outperforms both the equally spaced and the contiguous pilot patterns. The unsatisfactory performance of the equally spaced and contiguous patterns is attributed to the higher sidelobes associated with the corresponding expected squared DP-AF, which cause stronger interference from the strong target in the vicinity of the weak target.

\begin{figure}[t]
    \centering
    \begin{subfigure}{.49\linewidth}
		\includegraphics[width=1\linewidth]{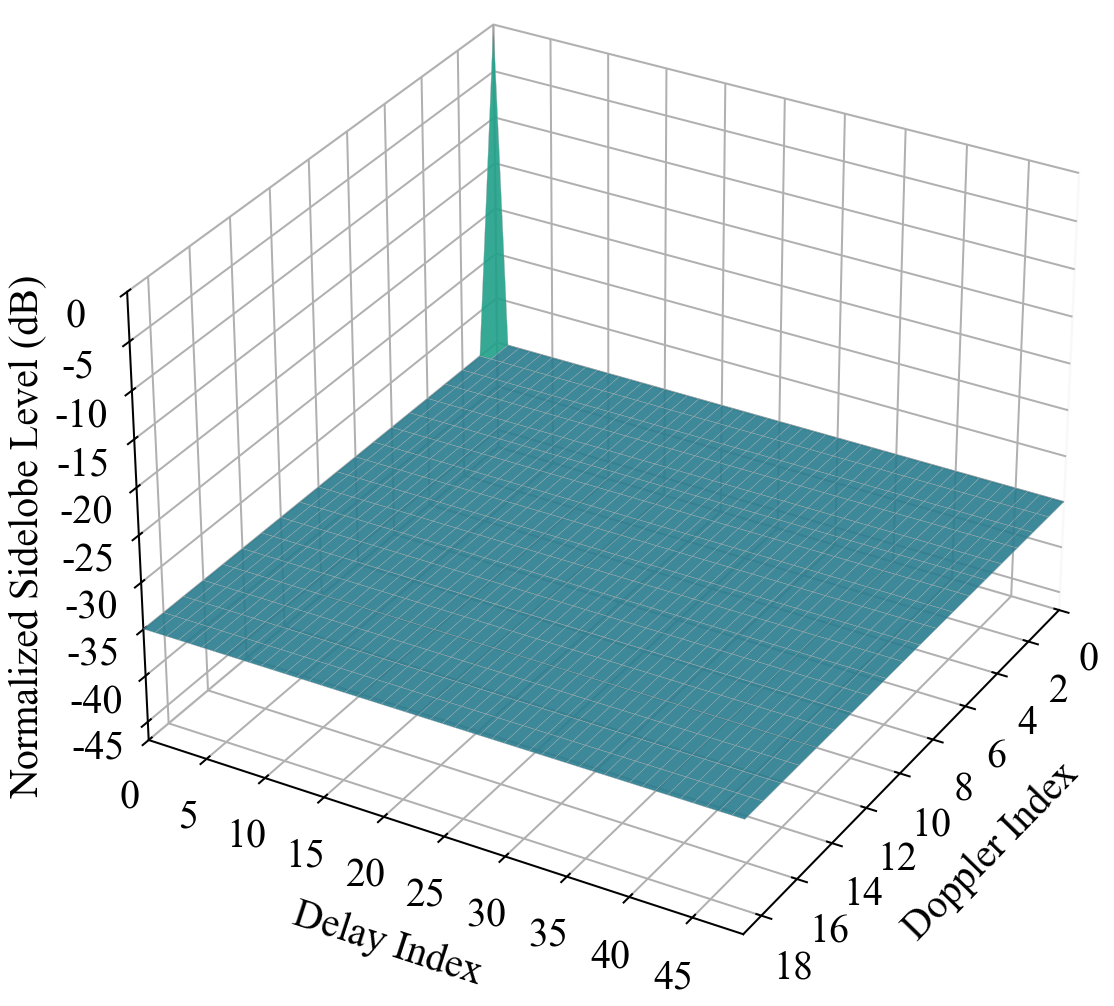}
		\caption{The case when $L=0$.}
		\label{fig:FSTAF_without_pilot_3D_theo}
	\end{subfigure}
    \centering
	\begin{subfigure}{.49\linewidth}
		\includegraphics[width=1\linewidth]{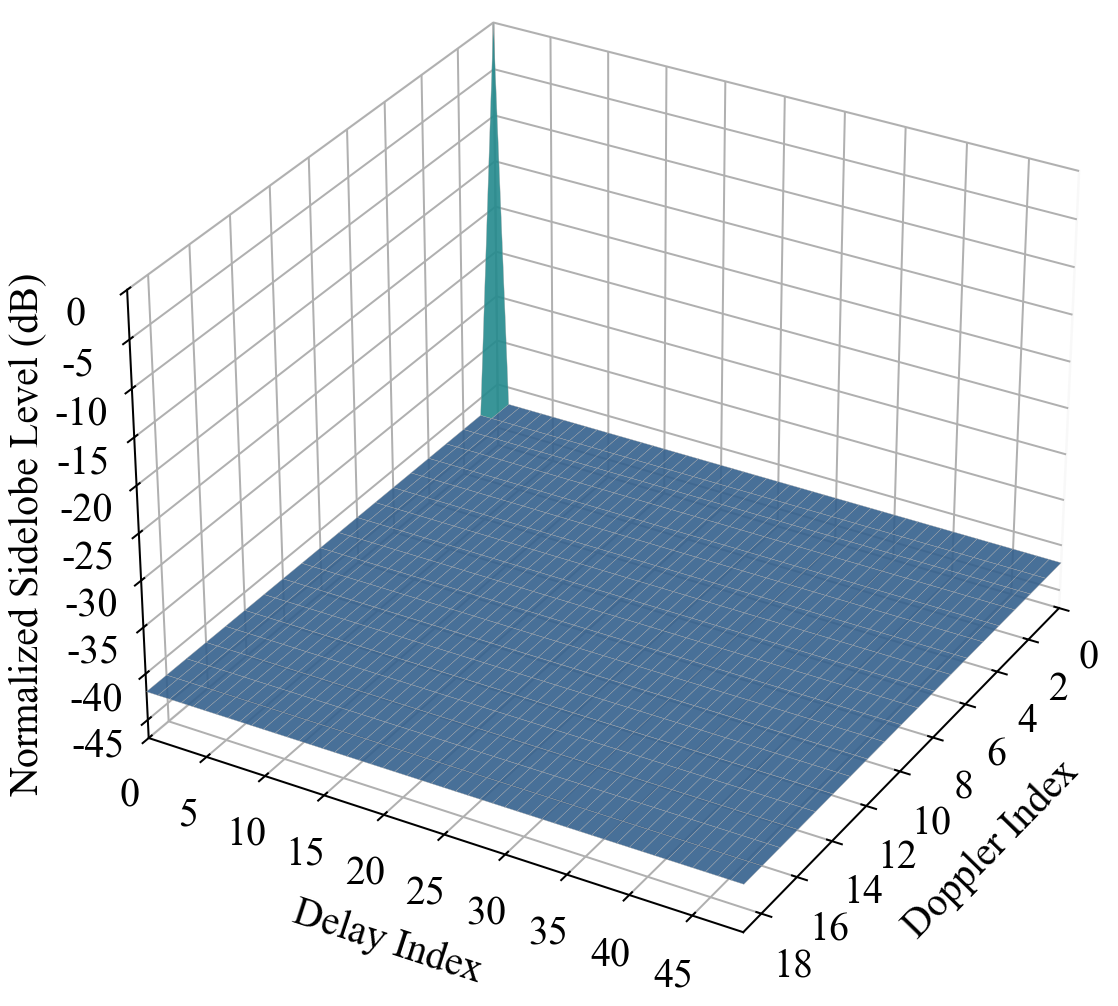}
		\caption{The case when $L=40M$.}
		\label{fig:FSTAF_with_pilot_3D_theo}
	\end{subfigure}
	\caption{Theoretical expected squared FST-AF of the OFDM signal, where $N=50$, $M=20$, under 16-QAM constellation.} 
    \label{fig:FSTAF_3D_theo}
\end{figure}

\begin{figure}[t]
    \centering
    \includegraphics[width=0.8\linewidth]{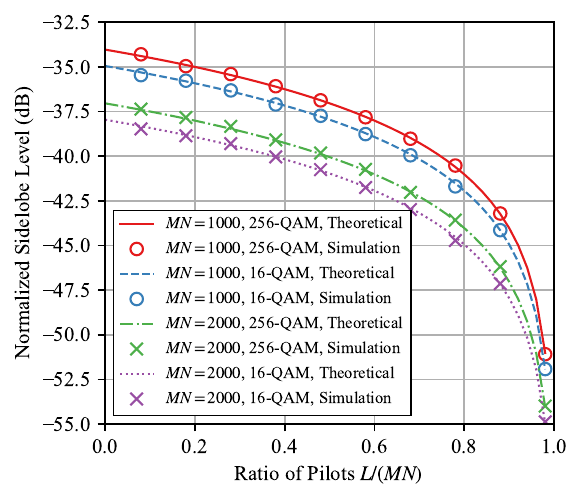}
    \caption{The normalized ESL of the expected squared FST-AF for the OFDM signal, with $N=50$, $M=20/40$ and varying $L$, under 16-QAM/256-QAM constellation.}
    \label{fig:FSTAF_sidelobe_vs_pilot_num}
\end{figure}

For the expected squared \textcolor{myLightPurple}{FST-AF}, it can be seen from Fig. \ref{fig:FSTAF_3D_theo} that the expected sidelobe levels are all equal. Similar to the DP-AF results in Fig. \ref{fig:DPAF_ACF_vs_pilot_num}, in Fig.~\ref{fig:FSTAF_sidelobe_vs_pilot_num}, it is observed that doubling $MN$ yields approximately a $3$ dB reduction in the normalized sidelobe level, and 256-QAM constellation produces higher sidelobes than 16-QAM due to its larger kurtosis.

\section{Conclusion}\label{section_6}

This paper has presented the analytical expressions for the expected squared DP-AF and FST-AF for pilot-embedded OFDM sensing. 
The expected sidelobes of FST-AF gives a uniform floor determined by the resource size, pilot count, and kurtosis.
The expected sidelobes of DP-AF depend on pilot symbols and patterns, \textcolor{myLightPurple}{and no pilot design can minimize all sidelobes in a 2D area simultaneously. Attaining the upper or lower bound at non-zero Doppler requires a periodic pilot pattern, with equally spaced chirp or ZC sequences maximizing the number of bound-achieving sidelobes. For ZC pilots, equally spaced placement yields a comb-like peak-and-notch profile, while contiguous placement produces delay-Doppler ridges described by squared Dirichlet kernels. Both regular patterns induce pronounced high sidelobes, and we introduce a randomly placed pattern to suppress the high sidelobes.}
\textcolor{myLightPurple}{Numerical simulations} 
confirmed the analytical results and demonstrated that jointly exploiting random data together with pilots yields substantial estimation gains, with the choice of pilot pattern markedly affecting the \textcolor{myLightPurple}{sensing} accuracy. Future work will incorporate communication-side metrics, 
such as the achievable rate and channel-estimation accuracy, 
into a joint ISAC optimization that co-designs the pilot and data \textcolor{myLightPurple}{resources.} 

{\appendices

\section{Proof of Theorem \ref{theorem_1}}\label{proof_of_theorem_1}

With the notation $\mathbf{B}=\mathbf{F}_N\mathbf{\Phi}_{N,q}\mathbf{R}_{N,k}\mathbf{F}_N^H$, \eqref{define_squared_DPAF} yields
\begin{align}\label{squared_DPAF_with_C}
    |\mathcal{X}^{\mathrm{DP}}(k,q)|^2  =|\mathbf{s}^H\mathbf{B}\mathbf{s}|^2=|(\mathbf{d}+\mathbf{p})^H\mathbf{B}(\mathbf{d}+\mathbf{p})|^2.
\end{align}
Let $m,n,l,r\in \mathbb{Z}_N$ and $w'_n=1-w_n$. From \eqref{eq:defdi}, it follows that $d_n$ takes a random communication symbol when $w_n=0$, and equals zero when $w_n=1$. Under Assumption \ref{assumption_1} we have
\begin{equation}
\mathbb{E}(d_m)=
\mathbb{E}(d_md_n)=0,\ 
\mathbb{E}(d_md_n^*)=w'_m\delta_{m,n},
\end{equation}
\begin{equation}
\mathbb{E}(d_m^*d_nd_l)=\mathbb{E}(d_m^*d_n^*d_l)=0,
\end{equation}
\begin{equation}\begin{aligned}
\mathbb{E}(d_m^*d_nd_ld_r^*)
&=w'_mw'_nw'_lw'_r\left(\delta_{m,n}\delta_{l,r}+\delta_{m,l}\delta_{n,r}\right)\\&
+(\kappa-2)\,w'_m\,\delta_{m,n,l,r}.
\end{aligned}\end{equation}
Hence the expectation of \eqref{squared_DPAF_with_C} can be decomposed as
\begin{align}\label{dpaf_expect_decompose}
\nonumber\mathbb{E}\left(|\mathcal{X}^{\mathrm{DP}}(k,q)|^2\right)
&=|\mathbf{p}^H\mathbf{B}\mathbf{p}|^2
+\mathbb{E}\left(|\mathbf{d}^H\mathbf{B}\mathbf{d}|^2\right)\\\nonumber&+\mathbb{E}\left(|\mathbf{p}^H\mathbf{B}\mathbf{d}+\mathbf{d}^H\mathbf{B}\mathbf{p}|^2\right)\\
&+
2\operatorname{Re}\!\left[(\mathbf{p}^H\mathbf{B}\mathbf{p})^*\,\mathbb{E}\left(\mathbf{d}^H\mathbf{B}\mathbf{d}\right)\right].
\end{align}
Due to $b_{m,n}=e^{-j2\pi nk/N}\delta_{n,\langle m-q\rangle_N}$, we have
\begin{equation}
    |\mathbf{p}^H\mathbf{B}\mathbf{p}|^2=\left|\sum_{n}p_np_{\langle n-q\rangle_N}^*\cdot e^{j2\pi nk/N}\right|^2.
\end{equation}
Let $\mathbf{W}'=\operatorname{Diag}(w'_0,\dots,w'_{N-1})$, then the remaining items in \eqref{dpaf_expect_decompose} can be derived as 
\begin{equation}
\mathbb{E}\left(\mathbf{d}^H\mathbf{B}\mathbf{d}\right)
=\operatorname{tr}(\mathbf{B}\mathbf{W}')=\sum_n w'_ne^{-j2\pi nk/N}\,\delta_{q,0},
\end{equation}
\begin{equation}\begin{aligned}
\mathbb{E}\left(|\mathbf{p}^H\mathbf{B}\mathbf{d}+\mathbf{d}^H\mathbf{B}\mathbf{p}|^2\right)
&=\sum_n w'_n\left(|(\mathbf{B}\mathbf{p})_n|^2+|(\mathbf{B}^H\mathbf{p})_n|^2\right)\\&=\sum_n w'_n\left(w_{\langle n-q\rangle_N}+w_{\langle n+q\rangle_N}\right),
\end{aligned}
\end{equation}
\begin{align}
\nonumber\mathbb{E}\left(|\mathbf{d}^H\mathbf{B}\mathbf{d}|^2\right)
&=\left|\operatorname{tr}(\mathbf{B}\mathbf{W}')\right|^2
+\operatorname{tr}\left(\mathbf{B}\mathbf{W}'\mathbf{B}^H\mathbf{W}'\right)\\\nonumber
&+(\kappa-2)\sum_n |b_{nn}|^2 w'_n
\\\nonumber&=\left|\sum_n w'_ne^{-j2\pi nk/N}\,\delta_{q,0}\right|^2+\sum_n w'_nw'_{\langle n-q\rangle_N}
\\&+(\kappa-2)\sum_n  w'_n\,\delta_{q,0}.
\end{align}
Thus, the expected squared DP-AF when $q\neq0$ is derived as
\begin{align}
\nonumber\mathbb{E}\left(|\mathcal{X}^{\mathrm{DP}}(k,q)|^2\right)
&=|\mathbf{p}^H\mathbf{B}\mathbf{p}|^2+\sum_n w'_nw'_{\langle n-q\rangle_N}
\\&\nonumber+\sum_n w'_n\left(w_{\langle n-q\rangle_N}+w_{\langle n+q\rangle_N}\right)
\\&=|\mathbf{p}^H\mathbf{B}\mathbf{p}|^2+N-\sum_n w_nw_{\langle n-q\rangle_N}.
\end{align}
And when $q=0$, the expected squared DP-AF is derived as
\begin{align}
\nonumber\mathbb{E}\left(|\mathcal{X}^{\mathrm{DP}}(k,0)|^2\right)
&=\left|\sum_{n}w_ne^{j2\pi nk/N}\right|^2+\left|\sum_{n}w'_ne^{-j2\pi nk/N}\right|^2
\\\nonumber&+2\operatorname{Re}\!\left[\sum_n w_ne^{j2\pi nk/N}\,\sum_n w'_ne^{-j2\pi nk/N}\right]
\\&\nonumber+(\kappa-1)(N-L)
\\&=N^2\delta_{k,0}+ (\kappa-1)(N-L),
\end{align}
which matches the expression in Theorem \ref{theorem_1}.

\section{Proof of Proposition \ref{proposition_2}}\label{proof_of_proposition_2} 
Define $\mathcal{W}=\{n\mid w_n=1\}$, and $\mathcal{W}+q=\{\langle n+q\rangle_N\mid n\in \mathcal{W}\}.$ Then $h(q)=\mathrm{Card}[\mathcal{W}\cap(\mathcal{W}+q)]=\mathrm{Card}[\mathcal{W}\cap(\mathcal{W}-q)].$ Note that $h(q)=L$ is equivalent to $\mathcal{W}+q=\mathcal{W}$ and $\mathcal{W}-q=\mathcal{W}$, thus $\mathcal{D}_h\cup\{0\}$ is closed under addition and subtraction. 


When $\mathcal{D}_h\neq\varnothing$, assume $q_{min}$ is the minimal positive element in $\mathcal{D}_h\cup\{0\}$. For any $e\in \mathcal{D}_h$, we have $e=r_1q_{min}+r_2$, where $r_1,r_2\in\mathbb{Z}, 0\leq r_2<q_{min}$. If $r_2\neq0$, by the closure of $\mathcal{D}_h\cup\{0\}$, it follows that $r_2\in\mathcal{D}_h$, contradicting the minimality of $q_{min}$. Thus all elements in $\mathcal{D}_h$ are multiples of $q_{min}$, i.e., 
\begin{align}
    \mathcal{D}_h\cup\{0\}=\{iq_{min}|i=0,...,\mathrm{Card}\left(\mathcal{D}_h\right)\}.
\end{align}

Furthermore, it is constrained that $\mathrm{Card}\left(\mathcal{D}_h\right)q_{min}<N$ and $[\mathrm{Card}\left(\mathcal{D}_h\right)+1]q_{min}\geq N$. Due to the closure of $\mathcal{D}_h\cup\{0\}$, $\langle [\mathrm{Card}\left(\mathcal{D}_h\right)+1]q_{min}\rangle_N\in\mathcal{D}_h\cup\{0\}$, which leads to \textcolor{myLightPurple}{$[\mathrm{Card}\left(\mathcal{D}_h\right)+1]q_{min}=N$.}

Due to $q_{min}\in\mathcal{D}_h$, for any $m=1,...,[\mathrm{Card}\left(\mathcal{D}_h\right)+1]$, $\mathcal{W}+mq_{min}=\mathcal{W}$.
Thus for any $a\in\mathcal{W}$, $\langle a+mq_{min}\rangle_N\in\mathcal{W}$, constituting $\mathrm{Card}\left(\mathcal{D}_h\right)+1$ elements in $\mathcal{W}$, which leads to \textcolor{myLightPurple}{$[\mathrm{Card}\left(\mathcal{D}_h\right)+1]\mid L$}.

Hence, when $\mathcal{D}_h\neq\varnothing$, $\mathrm{Card}\left(\mathcal{D}_h\right)+1>1$ is a common divisor of $N$ and $L$, and $\mathbf{w}$ has period $q_{min}=N/[\mathrm{Card}\left(\mathcal{D}_h\right)+1]$. Conversely, if $\mathrm{gcd}(N,L)>1$ and $\lambda_{N,L}>1$ is a common divisor of $N$ and $L$, we can construct a pattern $\mathbf{w}$ with period $T_{\mathbf{w}}=N/\lambda_{N,L}$, which satisfies $\mathcal{D}_h=\{iT_\mathbf{w}|i=1,...,\lambda_{N,L}-1\}$:
\begin{align}
    w_i=\begin{cases}
        1, & \text{if $\langle i\rangle_{T_\mathbf{w}}=0,...,L/\lambda_{N,L}-1$},
        \\0, & \text{else}.
    \end{cases}
\end{align}

\section{Proof of Proposition \ref{proposition_3}}\label{proof_of_proposition_3}
For the $N$ sidelobes corresponding to any $q\in \mathcal{D}_h$, owing to the invariance in \eqref{sum_of_sidelobe}, it can be verified that if the number of sidelobes that attain the lower bound is $(N-N/L)$, then the remaining $N/L$ sidelobes attain the upper bound. 
\textcolor{myLightPurple}{In this case}, the number of sidelobes that attain the lower or upper bounds cannot be increased further. Additionally, if $\mathrm{Card}\left(\mathcal{D}_h\right)=L-1$, 
then $L\mid N$, $\mathcal{D}_h=\{m(N/L)|m=1,...,L-1\}$ and that the pilot pattern is equally spaced with a period of $N/L$.

Next, we derive the pilot symbols. Without loss of generality, 
\textcolor{myLightPurple}{let the pilot subcarriers be} 
$l\frac{N}{L}$, $l=0,...,L-1$. Define $c_l=p_{l\frac{N}{L}}$ and $v_l^{(m)}=c_lc_{\langle l-m\rangle_L}^*$, which leads to
\begin{align}
    g(k,q)=\begin{cases}
        \left|\sum_{l=0}^{L-1}v_l^{(m)}e^{j2\pi l\frac{k}{L}}\right|^2& \text{if $q=m\frac{N}{L}$}\in \mathcal{D}_h,\\
        0,&\text{else}.
    \end{cases}
\end{align}
Accordingly, define the Fourier transform pair as follows:
\begin{equation}
    \begin{aligned}\label{transform_pair}
    V_k^{(m)}=\sum_{l=0}^{L-1}v_l^{(m)}e^{j2\pi l\frac{k}{L}},\ v_l^{(m)}=\frac{1}{L}\sum_{k=0}^{L-1}V_k^{(m)}e^{-j2\pi l\frac{k}{L}}.
\end{aligned}
\end{equation}

For $q=\frac{N}{L}\in \mathcal{D}_h$, $g(k,q)$ attains $L^2$ for $N/L$ values of $k$, and $0$ for the rest. Hence, within one period (i.e., for $k\in \mathbb{Z
}_L$), only one $k$ (denoted as $u$) gives $L^2$, while the others give $0$. Thus we have 
\textcolor{myLightPurple}{$V_k^{(1)}=Le^{j\theta_0}\delta_{k,u},\ \forall k\in \mathbb{Z}_L$,} 
where $\theta_0$ is the phase of $V_u^{(1)}$. By \textcolor{myLightPurple}{leveraging} 
\eqref{transform_pair}, we have 
\textcolor{myLightPurple}{$v_l^{(1)}=e^{j\theta_0}e^{-j2\pi l\frac{u}{L}},\ \forall l=0,...,L-1
$}, 
which implies \textcolor{myLightPurple}{that} 
\begin{align}\label{c_ditui}
c_l=v_l^{(1)}c_{\langle l-1\rangle_L}=e^{j\theta_0}e^{-j2\pi l\frac{u}{L}}c_{\langle l-1\rangle_L}.
\end{align}
Assume $c_0=e^{j\phi_0}$. It follows recursively from \eqref{c_ditui} that
\begin{align}\label{c_l}
    c_l=e^{j\phi_0}e^{-j\pi \frac{u}{L}l(l+1)}e^{jl\theta_0}, \ \forall l=0,...,L-1.
\end{align}
Due to $c_0=e^{j\theta_0}c_{L-1}$, we have $e^{jL\theta_0}=e^{j\pi u(L-1)}$, hence
\begin{align}\label{theta_0}
    \exists\  T\in \mathbb{Z},\ L\theta_0=\pi u(L-1)+2T\pi.
\end{align}
Substituting \eqref{theta_0} into \eqref{c_l} yields 
\begin{align}
    c_l=e^{j\phi_0}e^{-j\pi \frac{u}{L}l(l+1)}e^{jl\frac{\pi}{L}[u(L-1)+2T]}.
\end{align}

When $L$ is odd, let $A=e^{j\phi_0}$, $\nu=T+u(L-1)/2\in \mathbb{Z}$; then the expression of pilot symbols is
\begin{align}\label{chirp_odd_fulu}
    p_{l\frac{N}{L}}=c_l=A\cdot e^{\left[-j\frac{\pi u}{L}l(l+1)\right]}\cdot e^{\left(j\frac{2\pi\nu}{L}l\right)}.
\end{align}

When $L$ is even, let $A=e^{j\phi_0}$, $\nu=T+u(L-2)/2\in \mathbb{Z}$; then the expression of pilot symbols is
\begin{align}\label{chirp_even_fulu}
    p_{l\frac{N}{L}}=c_l=A\cdot e^{\left(-j\frac{\pi u}{L}l^2\right)}\cdot e^{\left(j\frac{2\pi\nu}{L}l\right)}.
\end{align}

Next, we prove that for any $A\in\mathbb{C}$, $|A|=1$, and $u,\nu \in \mathbb{Z}$, the numbers of sidelobes that attain the upper and lower bounds are maximized. \textcolor{myLightPurple}{Substituting \eqref{chirp_odd_fulu} into $V_k^{(m)}=\sum_{l=0}^{L-1}p_{l\frac{N}{L}}p_{\langle l-m\rangle_L\frac{N}{L}}^*e^{j2\pi l\frac{k}{L}}$ yields}
\textcolor{myLightPurple}{\begin{align}
& \nonumber
V_k^{(m)}=\sum_{l=0}^{m-1}e^{-j\frac{\pi ul(l+1)}{L}}e^{j\frac{\pi u(l-m+L)(l-m+L+1)}{L}}e^{j\frac{2\pi\nu}{L}m}e^{j2\pi l\frac{k}{L}}\\& +\sum_{l=m}^{L-1}e^{-j\frac{\pi ul(l+1)}{L}}e^{j\frac{\pi u(l-m)(l-m+1)}{L}}e^{j\frac{2\pi\nu}{L}m}e^{j2\pi l\frac{k}{L}}
\end{align}}
\newpage
\textcolor{myLightPurple}{which leads to}
\begin{align}
&\nonumber\  V_k^{(m)}
=\begin{cases}
    L e^{j\pi\frac{u}{L}m(m-1)}e^{j\frac{2\pi\nu}{L}m}, & \text{if} \ L\mid(k-um)\\0, &\text{else}
\end{cases}.
\end{align}

Similarly, substituting \eqref{chirp_even_fulu} into $V_k^{(m)}$ yields
\begin{align}
&\ V_k^{(m)}
=\begin{cases}
    Le^{j\pi\frac{u}{L}m^2}e^{j\frac{2\pi\nu}{L}m}, & \text{if} \ L\mid(k-um)\\0, &\text{else}
\end{cases}.
\end{align}
Consequently we have
\begin{align}\label{g_in_comb}
&\ g(k,q) = \begin{cases}
    L^2, & \text{if}\ q=m\frac{N}{L}\  \text{and} \ L\mid(k-um)\\0, &\text{else}
\end{cases}.
\end{align}
Furthermore, the expression of $h(q)$ can be given as
\begin{align}\label{h_in_comb}
h(q)=\begin{cases} L, \quad & \text{if}\ q=m\frac{N}{L}(m=1,...,L-1)\\0,&\text{else},\end{cases}.\end{align}

By combining the results in \eqref{g_in_comb} and \eqref{h_in_comb} \textcolor{myLightPurple}{via} \eqref{f_g_h_relation}, we arrive \textcolor{myLightPurple}{at \eqref{chirp_dp_af}}. One can easily check that the equality case holds here.

\section{Proof of Theorem \ref{theorem_2}}\label{proof_of_theorem_2}
With the notation $\mathbf{A}=\mathbf{F}_N^H$,  $\mathbf{A}'=\mathbf{F}_M$, \eqref{define_FSTAF} gives:
\begin{align}\mathcal{X}^{\mathrm{FST}}(k,q)&=\sqrt{MN}\sum_{n=0}^{N-1}\sum_{m=0}^{M-1}a_{k,n}a'_{m,q}\left|s_{n,m}\right|^2,
\end{align}
then we have
\begin{align}
&\nonumber\ |\mathcal{X}^{\mathrm{FST}}(k,q)|^2 = \sum_{n=0}^{N-1}\sum_{m=0}^{M-1} \left|s_{n,m}\right|^4\\&+MN\sum_{(n,m)\neq(l,r)}a_{k,n}a_{k,l}^*a'_{m,q}a_{r,q}'^{*}\left|s_{n,m}\right|^2\left|s_{l,r}\right|^2.
\end{align}
The corresponding expectation is
\begin{align}
&\nonumber\ \mathbb{E}\left(|\mathcal{X}^{\mathrm{FST}}(k,q)|^2\right)\nonumber\\&=\kappa(MN-L)+L\nonumber+MN\sum_{(n,m)\neq(l,r)}a_{k,n}a_{k,l}^*a'_{m,q}a_{r,q}'^*
\\&\nonumber=
(\kappa-1)(MN-L)\\&\nonumber+\sum_{n,m,l,r}\text{cos}\left[\frac{2\pi k}{N}(n-l)-\frac{2\pi q}{M}(m-r)\right],
\\&=(\kappa-1)(MN-L)+M^2N^2\delta_{k,0}\delta_{q,0},
\end{align}
which matches the expression in Theorem \ref{theorem_2}.

\section{Proof of Proposition \ref{proposition_4}}\label{proof_of_proposition_4} 
First we consider the situation when $L\leq{N}/2$. Without loss of generality, 
\textcolor{myLightPurple}{let the pilot subcarriers be} 
$0,...,L-1$. When $q>N-L$, 
\textcolor{myLightPurple}{$g(k, q)$ is derived as:}
\begin{equation}\label{block_g_l_xiao}
    \begin{aligned} 
&\ 
\textcolor{myLightPurple}{\left|\sum_{n=0}^{q-(N-L+1)}e^{-j\frac{\pi un(n+1)}{L}}e^{j\frac{\pi u(n-q+N)(n-q+N+1)}{L}}\cdot e^{j2\pi nk/N}\right|^2}
\\&=\begin{cases} (L-N+q)^2, & \text{if}\ k=\langle \frac{u(q-N)N}{L}\rangle_N,\\
\frac{\text{sin}^2\left[\pi\left(\frac{k}{N}-\frac{uq}{L}+\frac{uN}{L}\right)(L-N+q)\right]}{\text{sin}^2\pi\left(\frac{k}{N}-\frac{uq}{L}+\frac{uN}{L}\right)}&\text{else}.\end{cases}
\end{aligned}
\end{equation}
The case when $q<L$ can be obtained through the central symmetry in \eqref{central_symmetry}. Furthermore, $h(q)$ can be given as
\textcolor{myLightPurple}{\begin{align}
    h(q)=\max\{L-q,\mathrm{~}0,\mathrm{~}q-(N-L)\},
\end{align}}
\textcolor{myLightPurple}{and combining the above results via \eqref{f_g_h_relation} gives \eqref{block_dp_af_l_xiao}.}

Next we consider the situation when $L>{N}/2$. The \textcolor{myLightPurple}{cases} when $1\leq q \leq N-L$ and $L\leq q <N$ 
\textcolor{myLightPurple}{follow as in} 
\eqref{block_g_l_xiao}. When $N-L<q<L$, $g(k, q)$ is derived as:
\begin{equation}
    \begin{aligned}
& \left|\sum_{n=0}^{q-1}p_np^*_{n-q+N}\cdot e^{j2\pi nk/N}+\sum_{n=q}^{L-1}p_np^*_{n-q}\cdot e^{j2\pi nk/N}\right|^2=\\&
\left|e^{\frac{j\pi u(N-q)(N-q+1)}{L}}\sum_{n=0}^{n_q}e^{j2\pi n\beta}+e^{\frac{j\pi uq(q-1)}{L}}\sum_{n=q}^{L-1}e^{j2\pi n\alpha}\right|^2.
\end{aligned}
\end{equation}
where $n_q=L-1+q-N$, $\alpha = \frac{k}{N}-\frac{qu}{L}$ and $\beta = \alpha+\frac{uN}{L}$. When $e^{j2\pi n\alpha} \neq 1$ and $e^{j2\pi n\beta} \neq 1$, 
$g(k,q)$ can be further derived as
\begin{equation}\begin{aligned}
\textcolor{myLightPurple}{\left|\frac{\text{sin}\left[\pi\left(L-q\right)\alpha\right]}{\text{sin}(\pi \alpha)}+(-1)^{k-uN}\cdot\frac{\text{sin}\left[\pi\left(L+q-N\right)\beta\right]}{\text{sin}(\pi \beta)}\right|^2.}
\end{aligned}\end{equation}

Furthermore, the expression of $h(q)$ is
\begin{align}
    \textcolor{myLightPurple}{h(q)=\max\{L-q,2L-N,q-(N-L)\}.}
\end{align}

By combining the above results \textcolor{myLightPurple}{through \eqref{f_g_h_relation}}, we arrive \textcolor{myLightPurple}{at \eqref{block_dp_af_l_da}}. When $e^{j2\pi n\alpha} = 1$ or $e^{j2\pi n\beta} = 1$, the expected squared DP-AF approaches the corresponding limit.

}


\bibliographystyle{IEEEtran}
\bibliography{reference}

\IEEEpubidadjcol

\end{document}